\newtheorem{theorem}{Theorem}
\newtheorem{lemma}{\textbf{Lemma}}[section]
\newtheorem{remark}{\textbf{Remark}}[section]
\newtheorem{example}{\textbf{Example}}[section]
\newcommand{\F}{\mathbb{F}}
\begin{document}

\baselineskip 17pt
\title{\Large\bf New Quantum MDS Codes over Finite Fields}

\author{\large  Xiaolei Fang \quad\quad  Jinquan Luo\footnote{The authors are with School of Mathematics
and Statistics \& Hubei Key Laboratory of Mathematical Sciences, Central China Normal University, Wuhan China 430079.
\newline  E-mails: fangxiaolei@mails.ccnu.edu.cn(X.Fang), luojinquan@mail.ccnu.edu.cn(J.Luo).}}
\date{}
\maketitle

{\bf Abstract}: In this paper, we present three new classes of $q$-ary quantum MDS codes utilizing generalized Reed-Solomon codes satisfying Hermitian
self-orthogonal property. Among our constructions, the minimum distance of some $q$-ary quantum MDS codes can be bigger than $\frac{q}{2}+1$. Comparing to previous known constructions, the lengths of codes in our constructions are more flexible.

{\bf Key words}: Quantum MDS code, Generalized Reed-Solomon code, Hermitian construction, Hermitian self-orthogonal


\section{Introduction}

\quad\; Quantum error-correcting codes play an important role in quantum information transmission and quantum computation. Due to the establishment of
the connections between quantum codes and classical codes (see [\ref{AK},\ref{CRSS},\ref{Shor}]), great progress has been made in the study of quantum
error-correcting codes. One of these connections shows that quantum codes can be constructed from classical linear error-correcting codes
satisfying symplectic, Euclidean or Hermitian self-orthogonal properties (see [\ref{AKS},\ref{KKKS},\ref{Steane}]).

Let $q$ be a prime power. We use $[[n,k,d]]_q$ to denote a $q$-ary quantum code of length $n$, dimension $q^k$ and minimum distance $d$. Similar to the
classical counterparts, quantum codes have to satisfy the quantum Singleton bound: $k\leq n-2d+2$. The quantum code attaching this bound is called
quantum maximum-distance-separable(MDS) code.

In the past few decades, quantum MDS codes have been extensively studied. The construction of $q$-ary quantum MDS codes with length $n\leq q+1$ has been
investigated from classical Euclidean orthogonal codes (see [\ref{GBR},\ref{RGB}]). On the other hand, some quantum MDS codes with length
$n\geq q+1$ have been investigated, most of which have minimum distances less than $\frac{q}{2}+1$ (see [\ref{JLLX}]). So it is a challenging and valuable
task to construct quantum MDS codes with minimal distances larger than $\frac{q}{2}+1$. Recently, researchers have constructed some of such quantum MDS
codes utilizing constacyclic codes, negacyclic codes and generalized Reed-Solomon codes
(see [\ref{CLZ},\ref{FF1},\ref{FF2},\ref{Guardia}-\ref{JX},\ref{KZ},\ref{KZL},\ref{LX},\ref{LXW},\ref{SYC},\ref{SYZ},\ref{WZ}-\ref{ZG}]).
However, $q$-ary quantum MDS codes with minimal distances bigger than $\frac{q}{2}+1$ are far from complete.

There are dozens of papers on the construction of $[[n,n-2d,d+1]]_q$ quantum MDS codes with relatively large minimum distances. Most of the known
$[[n,n-2d,d+1]]_q$ quantum MDS codes with minimum distances larger than $\frac{q}{2}+1$ have lengths $n\equiv0,1\,(\mathrm{mod}\,q+1)$
(see [\ref{CLZ},\ref{FF1},\ref{GBR},\ref{HXC},\ref{JLLX},\ref{KZ},\ref{KZL},\ref{SYC},\ref{SYZ},\ref{ZG}]) or $n\equiv0,1\,(\mathrm{mod}\,q-1)$
(see [\ref{FF1},\ref{GBR},\ref{HXC}-\ref{JX},\ref{KZ},\ref{SYC},\ref{SYZ},\ref{WZ},\ref{ZG}]), except for the following cases.

(i). $n=q^2-l$ and $d\leq q-l-1$ for $0\leq l\leq q-2 $ (see [\ref{LXW}]).

(ii). $n=mq-l$ and $d\leq m-l$ for $0\leq l<m$ and $1<m<q$ (see [\ref{LXW}] and also [\ref{FF2}] for $l=0$).

(iii). $n=t(q+1)+2$ and $1\leq d \leq t+1$ for $1\leq t\leq q-1$ and $(p,t,d)\neq (2,q-1,q)$ (see [\ref{FF2}] and also [\ref{LXW}] for $t=q-1$).

In this paper, we construct several new classes of quantum MDS codes whose minimum distances can be larger than $\frac{q}{2}+1$ via
generalized Reed-Solomon codes and Hermitian construction. Their lengths are different from the above three cases and also in most cases, are not of
the form $n\equiv0,1\,(\mathrm{mod}\,q\pm1)$. More precisely, the parameters of $[[n,n-2d,d+1]]_q$ quantum MDS codes are as follows:

(i). $n=1+lh+mr-\frac{q^2-1}{st}\cdot hr$ and $1\leq d\leq min\{\frac{s+h}{2}\cdot\frac{q+1}{s}-1,\frac{q+1}{2}+\frac{q-1}{t}-1\}$,
for odd $s\mid q+1$, even $t\mid q-1$, $t\geq 2$, $l=\frac{q^2-1}{s}$, $m=\frac{q^2-1}{t}$, odd $h\leq s-1$, $r\leq t$ and $q-1>\frac{q^2-1}{st}\cdot hr$
(see Theorem \ref{Thm});

(ii). $n=lh+mr-\frac{q^2-1}{st}\cdot hr$ and $1\leq d\leq min\{\lfloor\frac{s+h}{2}\rfloor\cdot\frac{q+1}{s}-2,\frac{q+1}{2}+\frac{q-1}{t}-1\}$,
for odd $s\mid q+1$, even $t\mid q-1$, $t\geq 2$, $l=\frac{q^2-1}{s}$, $m=\frac{q^2-1}{t}$, $h\leq s-1$, $r\leq t$ and
$q-1>\frac{q^2-1}{st}\cdot hr$ (see Theorem \ref{ThmD});

(iii). $n=lh+mr$ and $1\leq d\leq min\{\lfloor\frac{s+h}{2}\rfloor\cdot\frac{q+1}{s}-2,\frac{q+1}{2}+\frac{q-1}{t}-1\}$, for even $s\mid q+1$,
even $t\mid q-1$, $t\geq 2$, $l=\frac{q^2-1}{s}$, $m=\frac{q^2-1}{t}$, $h\leq \frac{s}{2}$ and $r\leq \frac{t}{2}$ (see Theorem \ref{ThmF}).

This paper is organized as follows. In Section 2, we will introduce some basic knowledge and useful results on Hermitian self-orthogonality and
generalized Reed-Solomon codes, which will be utilized in the proof of main results. In Sections 3-5, we will present our main
results on the constructions of quantum MDS codes. In Section 6, we will make a conclusion.

\section{Preliminaries}

 \quad\; In this section, we introduce some basic notations and useful results on Hermitian self-orthogonality and generalized Reed-Solomon codes
(or GRS codes for short).

Let $\mathbb{F}_{q^2}$ be the finite field with $q^2$ elements and $\F_{q^2}^*=\F_{q^2}\backslash\{0\}$, where $q$ is a prime power. Obviously,
$\mathbb{F}_{q}$ is a subfield of $\mathbb{F}_{q^{2}}$ with $q$ elements and denote by $\mathbb{F}_{q}^{*}=\mathbb{F}_{q}\backslash\{0\}$.
For any two vectors $\overrightarrow{x}=(x_1,\ldots,x_n)$ and $\overrightarrow{y}=(y_1,\ldots,y_n)\in\mathbb{F}_{q^2}$,
the Euclidean and Hermitian inner products are defined as $$\langle\overrightarrow{x},\overrightarrow{y}\rangle=\sum\limits_{i=1}^nx_iy_i$$
and $$\langle\overrightarrow{x},\overrightarrow{y}\rangle_H=\sum\limits_{i=1}^nx_iy_i^q$$ respectively.

For a linear code $C$ of length $n$ over $\mathbb{F}_{q^2}$, the Euclidean dual code of $C$ is defined as $$C^\perp:=\{\overrightarrow{x}\in\mathbb{F}_{q^2}^n:\langle\overrightarrow{x},\overrightarrow{y}\rangle=0, \,\,\text{for\,\,all}\,\,\overrightarrow{y}\in C\},$$ and the Hermitian dual code of $C$ is defined as
$$C^{\perp_H}:=\{\overrightarrow{x}\in\mathbb{F}_{q^2}^n:\langle\overrightarrow{x},\overrightarrow{y}\rangle_H=0, \,\,\text{for\,\,all}\,\,\overrightarrow{y}\in C\}.$$
If $C\subseteq C^{\perp_H}$, the code $C$ is called Hermitian self-orthogonal.

In 2001, Ashikhmin and Knill [\ref{AK}] proposed the Hermitian Construction of quantum codes, which is a very important technique for constructing
quantum codes from classical codes.

\begin{theorem}([\ref{AK}, Corollary 1])\label{thmA}
A $q$-ary quantum $[[n,n-2d,d+1]]_q$ MDS code exists provided that an $[n,d,n-d+1]_{q^2}$ MDS Hermitian self-orthogonal code exists.
\end{theorem}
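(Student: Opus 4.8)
The plan is to derive this statement as a specialization of the general Hermitian construction of Ashikhmin and Knill, whose main theorem produces a $q$-ary stabilizer code from any Hermitian self-orthogonal $\F_{q^2}$-linear code, with the quantum minimum distance governed by the Hermitian dual. Concretely, I would first invoke the general form: if $C$ is an $[n,k]_{q^2}$ linear code with $C\subseteq C^{\perp_H}$, then there exists a quantum code with parameters $[[n,\,n-2k,\,d_Q]]_q$, where $d_Q\geq d(C^{\perp_H})$ (with equality when the code is pure). The mechanism underlying this is the trace-symplectic correspondence: one checks that Hermitian orthogonality over $\F_{q^2}$ forces trace-symplectic orthogonality of the associated additive code of length $2n$ over $\F_q$, after which the stabilizer formalism of [\ref{CRSS}] yields the quantum code. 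Establishing this correspondence is the genuinely substantive part, and in the present paper it is legitimately imported as a black box from [\ref{AK}].

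With the general construction in hand, the specialization to MDS codes is short. Let $C$ be the given $[n,d,n-d+1]_{q^2}$ Hermitian self-orthogonal MDS code, so $\dim C=d$ and $C\subseteq C^{\perp_H}$; applying the construction with $k=d$ immediately produces a quantum $[[n,\,n-2d,\,d_Q]]_q$ code with $d_Q\geq d(C^{\perp_H})$. The next step is to identify $d(C^{\perp_H})$. I would use the standard fact that the (Euclidean or Hermitian) dual of an MDS code is again MDS; since $C^{\perp_H}$ has dimension $n-d$, being MDS it has minimum distance $n-(n-d)+1=d+1$. Hence $d_Q\geq d+1$.

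Finally I would close the gap from above using the quantum Singleton bound stated in the introduction, $k_Q\leq n-2d_Q+2$ with $k_Q=n-2d$, which rearranges to $d_Q\leq d+1$. Combining the two inequalities yields $d_Q=d+1$, so the resulting code is a genuine quantum MDS code $[[n,n-2d,d+1]]_q$, as claimed. The same equality $d_Q=d(C^{\perp_H})$ confirms that the code is pure, so the lower bound furnished by the construction is in fact attained rather than merely sufficient.

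The main obstacle is not in this specialization, which is essentially bookkeeping with dimensions and the Singleton bound, but in the general construction on which it rests: verifying that the Hermitian self-orthogonality of $C$ translates correctly into the symplectic self-orthogonality required by the stabilizer formalism. If a self-contained proof were demanded, I would have to set up the $\F_q$-bilinear trace form together with the explicit $\F_{q^2}\to\F_q^2$ coordinate expansion, and then check that $\langle\overrightarrow{x},\overrightarrow{y}\rangle_H=0$ for all $\overrightarrow{x},\overrightarrow{y}\in C$ implies the vanishing of the associated symplectic pairing on the length-$2n$ expanded code. That verification, rather than the MDS arithmetic, is where the real content lies.
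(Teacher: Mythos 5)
The paper offers no proof of this statement: Theorem \ref{thmA} is imported verbatim as a known result, cited as [\ref{AK}, Corollary 1], and is used purely as a black box in Sections 3--5. So there is no internal argument to compare yours against; what you have written is a derivation where the paper has only a citation. Your derivation is correct and is the standard route from the general Hermitian construction to the MDS form: (i) the general construction applied to a $k$-dimensional Hermitian self-orthogonal code $C$ yields an $[[n,n-2k,d_Q]]_q$ code with $d_Q$ bounded below by the minimum weight of $C^{\perp_H}\setminus C$, hence by $d(C^{\perp_H})$; (ii) writing $C^{(q)}=\{(c_1^q,\ldots,c_n^q):\overrightarrow{c}\in C\}$, one has $C^{\perp_H}=\bigl(C^{(q)}\bigr)^{\perp}$, and since the Frobenius map preserves weights, $C^{(q)}$ is MDS, so its Euclidean dual $C^{\perp_H}$ is MDS of dimension $n-d$ and minimum distance $d+1$; (iii) the quantum Singleton bound $n-2d\leq n-2d_Q+2$ forces $d_Q\leq d+1$, giving equality and purity as you note. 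The one caution is that your argument, exactly like the paper's citation, ultimately rests on the unproved trace-symplectic correspondence establishing the general stabilizer construction; you flag this honestly, so nothing is hidden, but the proposal should be understood as a reduction of the MDS statement to the general theorem of [\ref{AK}] (together with two standard classical facts), not as a self-contained proof --- which is precisely the level of rigor the paper itself adopts.
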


Choose two vectors $\overrightarrow{v}=(v_{1},v_{2},\ldots,v_{n})$ and $\overrightarrow{a}=(a_{1},a_{2},\ldots,a_{n})$,
where $v_{i}\in\mathbb{F}_{q^2}^{*}$ ($v_{i}$ may not be distinct) and $a_{i}$ are distinct elements
in $\mathbb{F}_{q^2}$. For an integer $d$ with $1\leq d\leq n$, the GRS code of length $n$ associated with $\overrightarrow{v}$
and $\overrightarrow{a}$ is defined as follows:
\begin{equation}\label{def GRS}
\mathbf{GRS}_{d}(\overrightarrow{a},\overrightarrow{v})=\{(v_{1}f(a_{1}),\ldots,v_{n}f(a_{n})):f(x)\in\mathbb{F}_{q^2}[x],\mathrm{deg}(f(x))\leq d-1\}.
\end{equation}
The generator matrix of the code $\mathbf{GRS}_{d}(\overrightarrow{a},\overrightarrow{v})$ is
\begin{equation}
G_{d}(\overrightarrow{a},\overrightarrow{v})=
\left(
  \begin{array}{cccc}
    v_1 & v_2 & \cdots & v_n \\
   v_1a_1 & v_2a_{2} & \cdots & v_na_{n} \\
    \vdots & \vdots & \ddots & \vdots \\
    v_1a_1^{d-1} & v_2a_2^{d-1} & \cdots & v_na_n^{d-1} \\
\end{array}
\right).
\end{equation}
It is well known that the code $\mathbf{GRS}_{d}(\overrightarrow{a},\overrightarrow{v})$ is a $q$-ary $[n,d,n-d+1]$ MDS code [\ref{MS}, Chapter 11].
The following theorem will be useful and it has been shown in [\ref{Rain},\ref{ZG}].

\begin{theorem}([\ref{Rain},\ref{ZG}])\label{y2}
The two vectors $\overrightarrow{a}=(a_1,\ldots,a_n)$ and $\overrightarrow{v}=(v_1,\ldots,v_n)$ are defined above. Then
$\mathbf{GRS}_{d}(\overrightarrow{a},\overrightarrow{v})$ is Hermitian self-orthogonal if and only if
$\langle\overrightarrow{a}^{qi+j},\overrightarrow{v}^{q+1}\rangle=0$, for all $0\leq i,j\leq d-1$.
\end{theorem}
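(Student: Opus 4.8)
The plan is to reduce Hermitian self-orthogonality to a condition on the rows of the generator matrix $G_{d}(\overrightarrow{a},\overrightarrow{v})$, whose $i$-th row is $g_i=(v_1a_1^i,v_2a_2^i,\ldots,v_na_n^i)$ for $0\le i\le d-1$. First I would show that $C:=\mathbf{GRS}_{d}(\overrightarrow{a},\overrightarrow{v})$ satisfies $C\subseteq C^{\perp_H}$ if and only if $\langle g_i,g_j\rangle_H=0$ for all $0\le i,j\le d-1$. The forward implication is immediate, since each $g_i$ is itself a codeword. For the reverse implication I would expand two arbitrary codewords $\overrightarrow{x}=\sum_i\alpha_ig_i$ and $\overrightarrow{y}=\sum_j\beta_jg_j$ and use that the Hermitian form is $\mathbb{F}_{q^2}$-linear in its first argument and $q$-semilinear in its second, which gives $\langle\overrightarrow{x},\overrightarrow{y}\rangle_H=\sum_{i,j}\alpha_i\beta_j^{\,q}\langle g_i,g_j\rangle_H$; this vanishes for every pair of codewords as soon as all the $\langle g_i,g_j\rangle_H$ vanish.

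Next comes the core computation. Writing out the Hermitian inner product of two generators gives
$$\langle g_i,g_j\rangle_H=\sum_{k=1}^n(v_ka_k^i)(v_ka_k^j)^q=\sum_{k=1}^nv_k^{\,q+1}a_k^{\,i+qj}.$$
I would then recognize the right-hand side as the Euclidean inner product $\langle\overrightarrow{a}^{\,i+qj},\overrightarrow{v}^{\,q+1}\rangle$, where $\overrightarrow{a}^{\,i+qj}=(a_1^{i+qj},\ldots,a_n^{i+qj})$ and $\overrightarrow{v}^{\,q+1}=(v_1^{q+1},\ldots,v_n^{q+1})$. Thus the pairwise orthogonality conditions on the generators translate exactly into the stated inner-product conditions, modulo the shape of the exponent.

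Finally I would observe that as $(i,j)$ ranges over all pairs with $0\le i,j\le d-1$, the exponent $i+qj$ runs over the same set of values as $qi+j$, simply by interchanging the roles of $i$ and $j$. Hence the family $\{\langle g_i,g_j\rangle_H=0:0\le i,j\le d-1\}$ coincides with the family $\{\langle\overrightarrow{a}^{\,qi+j},\overrightarrow{v}^{\,q+1}\rangle=0:0\le i,j\le d-1\}$, which is precisely the claimed criterion, completing both directions of the equivalence. The argument is essentially bookkeeping, so there is no genuine obstacle; the only point demanding care is the $q$-semilinearity of the Hermitian form, which both justifies the reduction to the generators and produces the $q$-th powers on $v_k$ and on the second index in $\langle g_i,g_j\rangle_H$.
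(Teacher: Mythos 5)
Your proof is correct and complete. Note that the paper itself contains no proof of this statement --- it is quoted as a known result from [\ref{Rain},\ref{ZG}] --- but your argument (reducing Hermitian self-orthogonality to the generator rows via $\mathbb{F}_{q^2}$-linearity in the first argument and $q$-semilinearity in the second, computing $\langle g_i,g_j\rangle_H=\sum_{k}v_k^{q+1}a_k^{\,i+qj}=\langle\overrightarrow{a}^{\,i+qj},\overrightarrow{v}^{\,q+1}\rangle$, and swapping $(i,j)\mapsto(j,i)$ to turn the exponent $i+qj$ into $qi+j$) is precisely the standard verification found in those references, with all the delicate points (spanning by the rows, semilinearity, the exponent bookkeeping) handled correctly.
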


If there are no specific statements, the following notations are fixed throughout this paper.

$\bullet$ Let $s\mid q+1$ and $t\mid q-1$ with $t$ even.

$\bullet$ Let $l=\frac{q^{2}-1}{s}$ and $m=\frac{q^{2}-1}{t}$.

$\bullet$ Let $g$ be a primitive element of $\mathbb{F}_{q^2}$, $\delta=g^s$ and $\theta=g^t$.

\begin{lemma}\label{y1}
Suppose $\gcd(s,t)=1$. For any $\alpha,\beta\in\mathbb{Z}_{q^2-1}$, the number of $(i,j)$ of
the equation $\alpha+si\equiv \beta+tj\,(\mathrm{mod}\,q^{2}-1)$ satisfying $0\leq i<\frac{q^2-1}{s}$ and $0\leq j<\frac{q^2-1}{t}$ is $\frac{q^2-1}{st}$.
\end{lemma}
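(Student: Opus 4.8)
The plan is to reduce the two-variable congruence to a one-variable counting problem by first solving for $i$ with $j$ held fixed, and then counting the admissible values of $j$. Write $N=q^2-1$ and set $c\equiv\beta-\alpha\,(\mathrm{mod}\,N)$, so that the equation becomes $si\equiv c+tj\,(\mathrm{mod}\,N)$. The divisibility hypotheses $s\mid N$, $t\mid N$ together with $\gcd(s,t)=1$ (which forces $st\mid N$) are the structural facts driving every step.

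First I would fix $j$ and count the $i\in\{0,1,\ldots,\frac{N}{s}-1\}$ solving $si\equiv c+tj\,(\mathrm{mod}\,N)$. Since $\gcd(s,N)=s$, this congruence is solvable if and only if $s\mid(c+tj)$; and when it is solvable, writing $c+tj=sw$ reduces it to $i\equiv w\,(\mathrm{mod}\,N/s)$, which has exactly one solution in the truncated range $0\leq i<N/s$. Thus for each $j$ there is either exactly one admissible $i$ (when $s\mid(c+tj)$) or none, so the total count equals the number of $j\in\{0,\ldots,\frac{N}{t}-1\}$ with $tj\equiv -c\,(\mathrm{mod}\,s)$.

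Next I would evaluate this one-variable count. Because $\gcd(t,s)=1$, the element $t$ is invertible modulo $s$, so the condition $tj\equiv -c\,(\mathrm{mod}\,s)$ pins $j$ down to a single residue class $j\equiv j_0\,(\mathrm{mod}\,s)$ for a unique $j_0\in\{0,\ldots,s-1\}$. The decisive observation is that $s\mid\frac{N}{t}$: indeed $s\mid N=t\cdot\frac{N}{t}$ and $\gcd(s,t)=1$ force $s\mid\frac{N}{t}$. Consequently the residue class $j\equiv j_0\,(\mathrm{mod}\,s)$ meets the interval $\{0,\ldots,\frac{N}{t}-1\}$ in exactly $\frac{N/t}{s}=\frac{N}{st}=\frac{q^2-1}{st}$ points, which is the claimed number and is manifestly independent of $\alpha,\beta$.

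The argument involves no serious obstacle; the only points requiring care are the two divisibility facts — that $s\mid N$ yields a unique $i$ in the truncated range, and that $\gcd(s,t)=1$ upgrades $s\mid N$ to $s\mid\frac{N}{t}$ — since both hinge on the coprimality hypothesis and on $s,t$ dividing $q^2-1$. Equivalently, one may phrase the whole count group-theoretically: as $i$ and $j$ range over their intervals, $si\bmod N$ and $tj\bmod N$ sweep out the subgroups $\langle s\rangle$ and $\langle t\rangle$ of $\mathbb{Z}_N$ bijectively, and the number of representations $c=a-b$ with $a\in\langle s\rangle$, $b\in\langle t\rangle$ equals $|\langle s\rangle\cap\langle t\rangle|=N/\mathrm{lcm}(s,t)=N/(st)$ whenever $c\in\langle s\rangle+\langle t\rangle=\mathbb{Z}_N$; this recovers the same value and explains why the count is uniform in $c$.
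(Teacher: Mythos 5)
Your proof is correct and follows essentially the same route as the paper's: rewrite the congruence as $si\equiv c+tj \pmod{q^2-1}$, observe that solvability for $i$ (with a unique solution in the truncated range) amounts to $tj\equiv -c \pmod{s}$, use $\gcd(s,t)=1$ to pin $j$ to a single residue class modulo $s$, and count the admissible $j$ in $\{0,\ldots,\frac{q^2-1}{t}-1\}$. If anything, you are slightly more careful than the paper, which asserts the final count $\frac{q^2-1}{st}$ without explicitly recording the fact $s\mid\frac{q^2-1}{t}$ that you derive from coprimality.
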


\begin{proof}
Let $\beta-\alpha=\gamma$. From $\alpha+si\equiv \beta+tj\,(\mathrm{mod}\,q^{2}-1)$, we have $si-tj\equiv \gamma\,(\mathrm{mod}\,q^{2}-1)$. When
$0\leq i<\frac{q^2-1}{s}$ and $0\leq j<\frac{q^2-1}{t}$, $si-tj\,\,\mathrm{mod}\,\,q^2-1$ runs $\frac{q^2-1}{st}$ times through
every element of $\mathbb{Z}_{q^2-1}$.

Indeed, for any $\gamma\in\mathbb{Z}_{q^2-1}$, we have $si-tj\equiv \gamma\,(\mathrm{mod}\,q^{2}-1)\Leftrightarrow s\mid tj+\gamma\Leftrightarrow tj\equiv-\gamma\,(\mathrm{mod}\,s)$.
Since $\gcd(s,t)=1$, then $j\,\,\mathrm{mod}\,\,s$ is unique. So when $0\leq j<\frac{q^2-1}{t}$, the number of $j$ satisfying the equation is
$\frac{q^2-1}{st}$. The values of $\gamma$ and $i$ will be determined after fixing $j$. So the number of $(i,j)$ of the equation
$\alpha+si\equiv \beta+tj\,(\mathrm{mod}\,q^{2}-1)$ is $\frac{q^2-1}{st}$ satisfying $0\leq i<\frac{q^2-1}{s}$ and $0\leq j<\frac{q^2-1}{t}$
is $\frac{q^2-1}{st}$.
\end{proof}

The following two lemmas have been shown in [\ref{FF1}] and [\ref{HXC}]. In order to make the paper self completeness, we will give proofs.

\begin{lemma}([\ref{FF1}, Lemmas 5 and 11])\label{lemC}
Assume that $h\leq s-1$.

(i). For any $0\leq i,j\leq \lfloor\frac{s+h}{2}\rfloor\cdot\frac{q+1}{s}-3$, $l\mid (qi+j+q+1)$ if and only if $qi+j+q+1=\mu\cdot l$,
with $\lceil\frac{s-h}{2}\rceil+1\leq\mu\leq\lfloor\frac{s+h}{2}\rfloor-1$.

(ii). For any $0\leq i,j\leq \lfloor\frac{s+h}{2}\rfloor\cdot\frac{q+1}{s}-2$ with $(i,j)\neq(0,0)$, $l\mid (qi+j)$
if and only if $qi+j=\mu\cdot l$, with $\lceil\frac{s-h}{2}\rceil+1\leq\mu\leq \lfloor\frac{s+h}{2}\rfloor-1$.
\end{lemma}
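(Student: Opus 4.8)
The plan is to prove both parts by a single method, observing first that in each ``if and only if'' the backward implication is trivial: if $qi+j=\mu l$ (resp. $qi+j+q+1=\mu l$) then $l$ obviously divides the left-hand side. So I only need the forward implications, and these amount to bounding the cofactor $\mu$. Moreover part (i) reduces to part (ii): writing $qi+j+q+1=q(i+1)+(j+1)$ and substituting $i'=i+1,\ j'=j+1$ carries the range $0\le i,j\le\lfloor\frac{s+h}{2}\rfloor\cdot\frac{q+1}{s}-3$ onto $1\le i',j'\le\lfloor\frac{s+h}{2}\rfloor\cdot\frac{q+1}{s}-2$, which is contained in the range of part (ii) with the same target interval for $\mu$. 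Hence it suffices to treat part (ii). Throughout I set $N=\lfloor\frac{s+h}{2}\rfloor\cdot\frac{q+1}{s}-2$ and recall that, since $s\mid q+1$, we have $l=\frac{q^{2}-1}{s}=\frac{q+1}{s}(q-1)$, so $l$ is a multiple of $q-1$.

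The key step is to pin down $i+j$ by reducing modulo $q-1$. Assume $0\le i,j\le N$, $(i,j)\neq(0,0)$ and $l\mid qi+j$, and write $qi+j=\mu l$ with $\mu\ge 1$ (here $qi+j\ge 1$). Since $q\equiv 1\pmod{q-1}$ we have $qi+j\equiv i+j\pmod{q-1}$, while $\mu l\equiv 0\pmod{q-1}$; therefore $i+j\equiv 0\pmod{q-1}$. Because $h\le s-1$ gives $\frac{s+h}{2}\le s-\frac12$, hence $\lfloor\frac{s+h}{2}\rfloor\le s-1$, one obtains $N<q-1$ and thus $1\le i+j\le 2N<2(q-1)$. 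A positive multiple of $q-1$ that is smaller than $2(q-1)$ must equal $q-1$, so $i+j=q-1$.

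It remains to bound $\mu$. From $i+j=q-1$ we get $qi+j=(q-1)(i+1)$, and comparing with $\mu l=\mu\frac{q+1}{s}(q-1)$ gives $i+1=\mu\frac{q+1}{s}$, that is $i=\mu\frac{q+1}{s}-1$ and $j=q-\mu\frac{q+1}{s}$. Imposing $i\le N$ yields $\mu\le\lfloor\frac{s+h}{2}\rfloor-\frac{s}{q+1}$, and since $\mu$ is an integer and $0<\frac{s}{q+1}\le 1$ this forces $\mu\le\lfloor\frac{s+h}{2}\rfloor-1$. Imposing $j\le N$ and using $q+2=s\cdot\frac{q+1}{s}+1$ yields $\mu+\lfloor\frac{s+h}{2}\rfloor\ge s+\frac{s}{q+1}$, hence $\mu\ge s+1-\lfloor\frac{s+h}{2}\rfloor$. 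Finally the elementary identity $s-\lfloor\frac{s+h}{2}\rfloor=\lceil\frac{s-h}{2}\rceil$, which I would verify by splitting into the cases $s+h$ even and $s+h$ odd, converts this into $\mu\ge\lceil\frac{s-h}{2}\rceil+1$, giving exactly the asserted interval $\lceil\frac{s-h}{2}\rceil+1\le\mu\le\lfloor\frac{s+h}{2}\rfloor-1$.

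I expect the lower bound on $\mu$ to be the main obstacle. The upper bound falls out immediately from the size of the range, but the lower bound is invisible until the reduction modulo $q-1$ forces $i+j=q-1$, after which the explicit formula $j=q-\mu\frac{q+1}{s}$ can be combined with the constraint $j\le N$. The two supporting facts that must be checked with care are the inequality $\lfloor\frac{s+h}{2}\rfloor\le s-1$ (which guarantees $N<q-1$ and hence that $q-1$ is the unique admissible value of $i+j$) and the floor/ceiling identity used in the last step.
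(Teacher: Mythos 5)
Your proposal is correct, and while it lands on the same explicit parametrization of solutions as the paper, it gets there by a different mechanism and with a cleaner organization. The paper proves (i) directly: it splits into the cases $s\equiv h\ (\mathrm{mod}\ 2)$ and $s\not\equiv h\ (\mathrm{mod}\ 2)$, bounds $qi+j+q+1$ to get $0<\mu<s$, writes $\mu l = q\left(\frac{\mu(q+1)}{s}-1\right)+\left(q-\frac{\mu(q+1)}{s}\right)$ and compares quotient and remainder modulo $q$ to read off $i=\frac{\mu(q+1)}{s}-2$, $j=q-\frac{\mu(q+1)}{s}-1$, then converts the bounds on $i,j$ into bounds on $\mu$; part (ii) is declared ``similar'' and omitted. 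You instead prove (ii) and deduce (i) from it via the shift $(i,j)\mapsto(i+1,j+1)$, eliminating the duplicated work, and you replace the base-$q$ digit comparison by reduction modulo $q-1$: since $(q-1)\mid l$ and $q\equiv 1\ (\mathrm{mod}\ q-1)$, the divisibility forces $i+j\equiv 0\ (\mathrm{mod}\ q-1)$, and your size estimate $N<q-1$ (from $\lfloor\frac{s+h}{2}\rfloor\le s-1$) pins $i+j=q-1$, after which the same formulas $i=\mu\frac{q+1}{s}-1$, $j=q-\mu\frac{q+1}{s}$ drop out. The endgame is identical in both proofs, but your identity $s-\lfloor\frac{s+h}{2}\rfloor=\lceil\frac{s-h}{2}\rceil$ handles both parities at once, whereas the paper repeats the argument for the odd case. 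What each buys: your route is parity-free and proves both parts from one computation, at the cost of two extra integrality observations (the bound $N<q-1$, and the step $\mu+\lfloor\frac{s+h}{2}\rfloor\ge s+\frac{s}{q+1}\Rightarrow\mu\ge s+1-\lfloor\frac{s+h}{2}\rfloor$, both of which you carry out correctly); the paper's quotient--remainder comparison is marginally more direct once the decomposition of $\mu l$ is guessed, but requires the case analysis and leaves (ii) to the reader.
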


\begin{proof}
(i). When $s\equiv h\,(\mathrm{mod}\,2)$, it implies $\lfloor\frac{s+h}{2}\rfloor=\frac{s+h}{2}$ and $\lceil\frac{s-h}{2}\rceil=\frac{s-h}{2}$.
Since $0\leq i,j\leq \frac{s+h}{2}\cdot\frac{q+1}{s}-3<q-2$, then $0<qi+j+q+1< q^2-1$, that is $0<\mu<s$.
From $qi+j+q+1=q\left(\frac{\mu\cdot(q+1)}{s}-1\right)+\left(q-\frac{\mu\cdot(q+1)}{s}\right)$, it follows that
$$i=\frac{\mu\cdot(q+1)}{s}-2,j=q-\frac{\mu\cdot(q+1)}{s}-1.$$ By $i<\frac{s+h}{2}\cdot\frac{q+1}{s}-2$ and $j<\frac{s+h}{2}\cdot\frac{q+1}{s}-2$,
it implies $\frac{s-h}{2}<\mu<\frac{s+h}{2}$. So $l\mid (qi+j)$ if and only if $qi+j=\mu\cdot l$, with $\frac{s-h}{2}+1\leq\mu\leq \frac{s+h}{2}-1$.

When $s\not\equiv h\,(\mathrm{mod}\,2)$, it implies $\lfloor\frac{s+h}{2}\rfloor=\frac{s+h-1}{2}$ and $\lceil\frac{s-h}{2}\rceil=\frac{s-h+1}{2}$.
Then the proof can be completed by proceeding as the situation that $s\equiv h\,(\mathrm{mod}\,2)$.

(ii). In a similar way, we can complete the proof. So we omit the details.
\end{proof}

\begin{lemma}([\ref{HXC}, Lemma 3.1])\label{yC}
The identity $\sum\limits_{\nu=0}^{m-1}\theta^{\nu\left(qi+j+\frac{q+1}{2}\right)}=0$ holds for all $0\leq i,j\leq \frac{q+1}{2}+\frac{q-1}{t}-2$,
with even $t\geq2$.
\end{lemma}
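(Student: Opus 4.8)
The plan is to recognize the sum as a geometric series governed by the order of $\theta$ and then to reduce the whole statement to an elementary divisibility claim. First I would observe that, since $t$ is even and $t\mid q-1$, the integer $q$ is odd, so $\frac{q+1}{2}$ is a genuine integer, and that $\theta=g^{t}$ has multiplicative order exactly $m=\frac{q^{2}-1}{t}$, i.e.\ $\theta$ is a primitive $m$-th root of unity in $\mathbb{F}_{q^{2}}$. Writing $k=qi+j+\frac{q+1}{2}$, the sum $\sum_{\nu=0}^{m-1}\theta^{\nu k}=\sum_{\nu=0}^{m-1}(\theta^{k})^{\nu}$ is a standard geometric sum: if $m\nmid k$ then $\theta^{k}\neq 1$ and the value is $\frac{(\theta^{k})^{m}-1}{\theta^{k}-1}=0$, whereas if $m\mid k$ the value is $m\cdot 1$, which is \emph{nonzero} in $\mathbb{F}_{q^{2}}$ because $\gcd(m,p)=1$ (indeed $m\mid q^{2}-1$ while $q^{2}-1\equiv-1\pmod p$). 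Hence the asserted identity is equivalent to the arithmetic statement that $m\nmid qi+j+\frac{q+1}{2}$ for every pair $(i,j)$ with $0\le i,j\le \frac{q+1}{2}+\frac{q-1}{t}-2$, and this is what I would prove.

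For the divisibility claim I would factor $m=N(q+1)$ with $N=\frac{q-1}{t}$, so that $m\mid k$ forces in particular $(q+1)\mid k$. Reducing modulo $q+1$ and using $q\equiv-1$, I get $k\equiv j-i+\frac{q+1}{2}\pmod{q+1}$, so $(q+1)\mid k$ can happen only when $j-i\equiv\frac{q+1}{2}\pmod{q+1}$. Since $0\le i,j\le B$ with $B=\frac{q+1}{2}+N-2$ and $N\le\frac{q-1}{2}$, the difference $j-i$ lies in the window $[-B,B]$, which is too short to contain any representative of this residue class other than $j-i=\frac{q+1}{2}$ or $j-i=-\frac{q+1}{2}$. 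This splits the analysis into exactly two cases (both vacuous when $N=1$, in which case already $(q+1)\nmid k$).

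In the case $j-i=\frac{q+1}{2}$ a direct substitution gives $k=(q+1)(i+1)$, and the constraint $j\le B$ forces $0\le i\le N-2$, hence $1\le i+1\le N-1$, so $N\nmid(i+1)$ and therefore $m=N(q+1)\nmid k$. In the case $j-i=-\frac{q+1}{2}$ one finds $k=(q+1)\bigl(j+\tfrac{q+1}{2}\bigr)$; here I would invoke $t$ even, which gives $\frac{q+1}{2}=\frac{Nt}{2}+1\equiv 1\pmod N$, so $j+\frac{q+1}{2}\equiv j+1\pmod N$, and again the range forces $0\le j\le N-2$, i.e.\ $1\le j+1\le N-1$ and $N\nmid(j+1)$. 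In both cases $m\nmid k$, which completes the reduction. The only genuinely delicate point—the place where the hypothesis that $t\ge 2$ is even is indispensable—is the congruence $\frac{q+1}{2}\equiv 1\pmod N$ in the second case; everywhere else the argument is just careful bookkeeping of the index ranges, so I expect the main obstacle to be packaging these range estimates cleanly rather than any conceptual difficulty.
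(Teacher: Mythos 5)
Your proof is correct, and it shares the paper's overall skeleton: reduce the vanishing of the geometric sum to the arithmetic claim $m\nmid qi+j+\frac{q+1}{2}$, split into two symmetric cases, and invoke the evenness of $t$ to close one of them. The difference lies in how the case split is produced. The paper supposes $qi+j+\frac{q+1}{2}=\mu m$, bounds $0<\mu<t$, writes $\mu m=q\cdot\mu\frac{q-1}{t}+\mu\frac{q-1}{t}$, and compares quotient and remainder upon division by $q$, splitting according to whether $j+\frac{q+1}{2}\le q-1$ or $j+\frac{q+1}{2}\ge q$; this yields $i=j+\frac{q+1}{2}=\mu\frac{q-1}{t}$ in the first case and $j-\frac{q-1}{2}=i+1=\mu\frac{q-1}{t}$ in the second, each contradicting the stated range of $i,j$. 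You instead factor $m=N(q+1)$ with $N=\frac{q-1}{t}$, reduce modulo $q+1$ to force $j-i\equiv\frac{q+1}{2}\pmod{q+1}$, and use the size of the window $[-B,B]$ to conclude $j-i=\pm\frac{q+1}{2}$; your two cases are then exactly the paper's two cases, and your congruence $\frac{q+1}{2}\equiv 1\pmod N$ (valid precisely because $t$ is even) is the same fact the paper uses in the form $\frac{q-1}{t}\mid\frac{q-1}{2}$. What your packaging buys is transparency: it explains why exactly two cases arise, isolates the single point where the parity hypothesis on $t$ enters, and spells out the step the paper dismisses as easy to check, namely that when $m\mid qi+j+\frac{q+1}{2}$ the sum equals $m\neq 0$ in $\mathbb{F}_{q^2}$ since $p\nmid q^2-1$. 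The arithmetic content of the two arguments is otherwise identical.
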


\begin{proof}
It is easy to check that the identity holds if and only if $m\nmid qi+j+\frac{q+1}{2}$. On the contrary, assume that $m\mid qi+j+\frac{q+1}{2}$. Let
\begin{equation}\label{rem}
qi+j+\frac{q+1}{2}=\mu\cdot m=q\cdot\frac{\mu(q-1)}{t}+\frac{\mu(q-1)}{t}
\end{equation}
with $\mu\in\mathbb{Z}$. By $t\geq2$, we have $qi+j+\frac{q+1}{2}<q^2-1$, which implies $0<\mu<t$.

\begin{itemize}
\item If $j+\frac{q+1}{2}\leq q-1$, comparing remainder and quotient of module $q$ on both sides of (\ref{rem}), we can deduce $i=j+\frac{q+1}{2}=\mu\cdot \frac{q-1}{t}$.
Since $t$ is even, then $\frac{q-1}{t}\mid\frac{q-1}{2}$. From $\frac{q-1}{t}\mid  j+1+\frac{q-1}{2}$, we can deduce that $\frac{q-1}{t}\mid j+1$.
Since $j+1\geq1$, then $j+1\geq\frac{q-1}{t}$. So $i=j+\frac{q+1}{2}\geq\frac{q+1}{2}+\frac{q-1}{t}-1$, which is a contradiction.

\item When $j+\frac{q+1}{2}\geq q$, it takes $qi+j+\frac{q+1}{2}=q(i+1)+\left(j-\frac{q-1}{2}\right)=q\cdot\frac{\mu(q-1)}{t}+\frac{\mu(q-1)}{t}$. In a similar way,
$j-\frac{q-1}{2}=i+1=\mu\cdot\frac{q-1}{t}$ which implies $\frac{q-1}{t}\mid i+1$. Since $i+1\geq1$, then $i+1\geq\frac{q-1}{t}$.
Therefore, $j=i+1+\frac{q-1}{2}\geq\frac{q+1}{2}+\frac{q-1}{t}-1$, which is a contradiction.
\end{itemize}
As a result, $m\nmid qi+j+\frac{q+1}{2}$ which yields $\sum\limits_{\nu=0}^{m-1}\theta^{\nu(qi+j+\frac{q+1}{2})}=0$ for all
$0\leq i,j\leq \frac{q+1}{2}+\frac{q-1}{t}-2$.
\end{proof}

\section{Quantum MDS Codes of Length $n=1+lh+mr-\frac{q^2-1}{st}\cdot hr$}

 \quad\; In this section, we assume that $\mathbf{s\,\, is\,\, odd}$, $\mathbf{h\leq s-1}$ with $\mathbf{h\,\,odd}$ and $\mathbf{r\leq t}$.
Quantum MDS codes of length $n=1+lh+mr-\frac{q^2-1}{st}\cdot hr$ will be constructed. The construction is based on [\ref{FF1}] and [\ref{HXC}].
Firstly, we choose elements in $\mathbb{F}_{q^2}^*/\langle\delta\rangle$ as the first part of coordinates in the vector $\overrightarrow{a}$.
Secondly, we choose elements from cosets of $\mathbb{F}_{q^2}^*/\langle\theta\rangle$ as the second part of coordinates in $\overrightarrow{a}$.
Finally, we consider the duplicating elements between these two parts. We construct the vector $\overrightarrow{v}$ in a similar way.
Then we can construct quantum MDS codes of length $n=1+lh+mr-\frac{q^2-1}{st}\cdot hr$, whose minimum distances can be bigger than $\frac{q}{2}+1$.

The next lemma has been shown in [\ref{FF1}]. We give a new proof by Cramer's Rule, which is shorter than [\ref{FF1}].

\begin{lemma}([\ref{FF1}, Lemma 7])\label{lemD}
For $\frac{s-h}{2}+1\leq \mu\leq\frac{s+h}{2}-1$, there exists a solution in $(\mathbb{F}_{q}^*)^{h}$ of the following system of equations
\begin{equation}\label{equB}
\begin{cases}
u_{0}+u_{1}+\cdots+u_{h-1}=1\\
\sum\limits_{k=0}^{h-1}g^{k\mu l}u_{k}=0&
\end{cases}
\end{equation}
\end{lemma}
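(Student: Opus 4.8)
The plan is to exploit the arithmetic of $\omega := g^{\mu l}$ in order to turn the two $\mathbb{F}_{q^2}$-equations into a small $\mathbb{F}_q$-linear system that Cramer's rule solves explicitly. First I would record the relevant properties of $\omega$. Since $l=\frac{q^2-1}{s}$ and $s\mid q+1$, one checks $\omega^{q+1}=g^{\mu l(q+1)}=1$ (because $s\mid\mu(q+1)$), so $\omega$ lies in the cyclic subgroup of order $q+1$ and satisfies $\omega^{q}=\omega^{-1}$. Its order divides $s$, hence is odd; and from $\frac{s-h}{2}+1\le\mu\le\frac{s+h}{2}-1$ together with $h\le s-1$ one gets $1<\mu<s$, so $\omega\neq1$ and $\omega$ has odd order $\ge 3$. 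Consequently $c_j:=\omega^{j}+\omega^{-j}$ is fixed by the Frobenius $x\mapsto x^{q}$ and therefore lies in $\mathbb{F}_q$, with $c_1\neq 2$ (otherwise $\omega=1$) and $c_1\neq 0$ (otherwise $\omega$ would have order $4$, impossible for odd order).

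Next I would reduce to a system over $\mathbb{F}_q$ by a symmetric ansatz. Writing $e=\frac{h-1}{2}$ (recall $h$ is odd) and searching for a palindromic solution $u_k=u_{h-1-k}$, I set $w_j:=u_{e+j}=u_{e-j}$ for $0\le j\le e$. Multiplying the second equation $\sum_k\omega^{k}u_k=0$ by $\omega^{-e}$ collapses it, under the symmetry, to $w_0+\sum_{j=1}^{e}c_j w_j=0$, while the first equation becomes $w_0+2\sum_{j=1}^{e}w_j=1$; both now have coefficients in $\mathbb{F}_q$. Thus it suffices to find $w_0,\dots,w_e\in\mathbb{F}_q^{*}$ solving these two $\mathbb{F}_q$-linear equations, since then every $u_k$ equals some nonzero $w_{|k-e|}$.

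Then I would solve by Cramer's rule. The $2\times2$ coefficient matrix of the pair $(w_0,w_1)$ is $\left(\begin{smallmatrix}1&2\\1&c_1\end{smallmatrix}\right)$, whose determinant $c_1-2\neq 0$; so $(w_0,w_1)$ is uniquely and $\mathbb{F}_q$-rationally determined by any assignment of the remaining $w_2,\dots,w_e$. When $h=3$ there are no free variables and Cramer gives $w_1=(2-c_1)^{-1}$ and $w_0=-c_1(2-c_1)^{-1}$, both nonzero by $c_1\notin\{0,2\}$. When $h\ge5$, I would let $w_2,\dots,w_e$ range over $\mathbb{F}_q^{*}$ and count: the conditions $w_0=0$ and $w_1=0$, pulled back through Cramer, cut out at most two affine hyperplanes in the $(e-1)$-dimensional parameter space, so the number of admissible choices is at least $q^{e-1}-(e+1)q^{e-2}=q^{e-2}\bigl(q-\tfrac{h+1}{2}\bigr)>0$, using $\tfrac{h+1}{2}\le\tfrac{s}{2}\le\tfrac{q+1}{2}<q$.

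The main obstacle is this last step: guaranteeing that all $h$ coordinates are simultaneously nonzero, and in particular that $w_0$ and $w_1$ are not forced to vanish identically as functions of the free parameters. I expect to dispatch this via the explicit Cramer expressions, whose linear parts involve the differences $c_j-c_1$ and $c_j-2$; these are nonzero for small $j$ precisely because $\omega$ has odd order $\ge 3$, and in the few degenerate configurations (for instance $\omega$ of order $3$, or characteristic $2$, where a linear part collapses) $w_0$ or $w_1$ turns out to be a nonzero constant, so that its hyperplane is empty and the count only improves. Verifying these degenerate cases cleanly, so that no coordinate is ever identically zero, is the delicate part of the argument.
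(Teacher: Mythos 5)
Your proposal solves the wrong system. In Lemma \ref{lemD}, the range $\frac{s-h}{2}+1\leq\mu\leq\frac{s+h}{2}-1$ is not a quantifier over separate instances: the system (\ref{equB}) consists of the normalization equation together with one equation $\sum_{k}g^{k\mu l}u_k=0$ for \emph{each} of the $h-1$ values of $\mu$, and a single vector $\overrightarrow{u}$ must satisfy all of them simultaneously. This reading is forced by how the lemma is used: in the proof of Lemma \ref{thmB} the vector $\overrightarrow{v}_1$ (hence $\overrightarrow{u}$) is fixed once and for all, while the exponent $qi+j=\mu l$ supplied by Lemma \ref{lemC} (ii) runs over the whole range of $\mu$ as $(i,j)$ varies. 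Accordingly, the paper's own proof works with an $h\times h$ matrix $A$ whose rows below the first are indexed by $\mu=c,c+1,\ldots,c+h-2$ with $c=\frac{s-h}{2}+1$, shows $\det(A)\neq 0$ (Vandermonde), and uses Cramer's rule plus a Frobenius/row-permutation argument to conclude that the unique solution of this full $h$-equation system lies in $(\mathbb{F}_q^*)^h$. You instead fix one $\mu$, set $\omega=g^{\mu l}$, and solve only the two equations $\sum_k u_k=1$ and $\sum_k\omega^k u_k=0$; your solution depends on $\mu$ through the quantities $c_j=\omega^j+\omega^{-j}$, so it proves a strictly weaker statement, and a vector built this way need not annihilate the power sums for the other values of $\mu$, which is exactly what Lemma \ref{thmB} requires.

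That said, your ingredients are sound and could be upgraded to a genuinely different proof. The observations $\omega^q=\omega^{-1}$, $c_j\in\mathbb{F}_q$ and $c_1\notin\{0,2\}$ are correct, and the palindromic ansatz $u_k=u_{h-1-k}$ has a feature you did not exploit: writing $\xi=g^{l}$ (of order $s$), the range of $\mu$ is symmetric about $\frac{s}{2}$ and $\xi^{(s-\mu)j}=\xi^{-\mu j}$, so under the symmetry the equations for $\mu$ and $s-\mu$ coincide. Since $s$ is odd, the $h-1$ values of $\mu$ pair up perfectly, and the \emph{full} system collapses to $\frac{h+1}{2}$ $\mathbb{F}_q$-linear equations in your $\frac{h+1}{2}$ unknowns $w_0,\ldots,w_e$ --- a square system over $\mathbb{F}_q$ rather than the paper's $h\times h$ system over $\mathbb{F}_{q^2}$. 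A repaired proof along your lines would have to show this square system is nonsingular and that every Cramer coordinate is nonzero (the analogue of the paper's generalized-Vandermonde step); your fixed-$\mu$ shortcut, and the hyperplane count built on it, avoids precisely this, so as written the proposal does not establish the lemma.
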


\begin{proof}
Denote by $\xi=g^l$ and $c=\frac{s-h}{2}+1$. For any $0\leq\nu\neq\nu'\leq h-2<s-2$, the elements $\xi^{c+\nu}$, $\xi^{c+\nu'}$ and $1$ are
distinct. The system of equations (\ref{equB}) can be expressed in the matrix form
\begin{equation}\label{equC}
\begin{aligned}
A\overrightarrow{u}^{T}=(1,0,\cdots,0)^T,
\end{aligned}
\end{equation}
where \begin{equation*}
A=
\left(
  \begin{array}{cccc}
    1 & 1 & \cdots & 1 \\
   1 & \xi^{c} & \cdots & \xi^{(h-1)c} \\
     \vdots & \vdots & \ddots & \vdots \\
   1 & \xi^{c+h-2}  & \cdots & \xi^{(h-1)(c+h-2)} \\
\end{array}
\right)_{h\times h}
\end{equation*}
and $$\overrightarrow{u}=(u_{0},u_{1},\ldots,u_{h-1}).$$ We will show that $u_k\in\F_q^*$ for any $0\leq k\leq h-1$.

It is obvious that $\mathrm{det}(A)\neq0$. By Cramer's Rule, $$u_k=\frac{(-1)^{k}\cdot \mathrm{det}(D_{k})}{\mathrm{det}(A)},$$ where
\begin{equation}
D_{k}=
\left(
  \begin{array}{ccccccc}
   1 & \xi^{c} & \cdots & \xi^{(k-1)c} & \xi^{(k+1)c} & \cdots & \xi^{(h-1)c} \\
   1 & \xi^{c+1} & \cdots  & \xi^{(k-1)(c+1)} & \xi^{(k+1)(c+1)} & \cdots & \xi^{(h-1)(c+1)} \\
   \vdots & \vdots & \ddots & \vdots & \vdots & \ddots & \vdots \\
   1 & \xi^{c+h-2} & \cdots & \xi^{(k-1)(c+h-2)} & \xi^{(k+1)(c+h-2)} & \cdots & \xi^{(h-1)(c+h-2)} \\
\end{array}
\right)
\end{equation}
is an $(h-1)\times(h-1)$ matrix obtained from $A$ by deleting $1$-st row and $(k+1)$-th column with $0\leq k\leq h-1$. It is easy to see
$\mathrm{det}(D_{k})$ is equal to non-zero constant times of a Vandermonde determinant. So $\mathrm{det}(D_{k})\neq0$, which implies $u_k\neq0$.

It remains to show $u_{k}\in\mathbb{F}_q$, for any $0\leq k\leq h-1$. Since $s\mid q+1$ and $\xi^s=1$, then
$$\xi^{k(c+\nu)q}=\xi^{-k\left(\frac{s-h}{2}+1+\nu\right)}=\xi^{k\left(\frac{s+h}{2}-1-\nu\right)}=\xi^{k(c+h-2-\nu)},$$ for any $0\leq k\leq h-1$ and
$0\leq \nu\leq h-2$. So $(\mathrm{det}(A))^q=(-1)^{\frac{h-1}{2}}\cdot\mathrm{det}(A)$ and
$\mathrm{det}(D_{k})^{q}=(-1)^{\frac{h-1}{2}}\cdot \mathrm{det}(D_{k})$. It follows that
$u_k^q=\frac{(-1)^{qk}\cdot\mathrm{det}(D_{k})^q}{(\mathrm{det}(A))^q}=\frac{(-1)^{k}\cdot \mathrm{det}(D_{k})}{\mathrm{det}(A)}=u_k$,
which implies $u_k\in\mathbb{F}_{q}^*$ with $0\leq k\leq h-1$. This completes the proof.
\end{proof}

Now we let $\overrightarrow{u}=(u_0,u_1,\ldots,u_{h-1})$ satisfy the system of equations (\ref{equB}). Choose
$$\overrightarrow{a}_1=(0,1,\delta,\ldots,\delta^{l-1},g,g\delta,\ldots,g\delta^{l-1},\ldots, g^{h-1},g^{h-1}\delta,\ldots,g^{h-1}\delta^{l-1})$$ and
$$\overrightarrow{v}_1=(e,\underbrace{v_{0},\ldots,v_{0}}_{l\,\, \text{times}},\ldots,\underbrace{v_{h-1},\ldots,v_{h-1}}_{l\,\, \text{times}}),$$
where $v_{k}^{q+1}=u_k$($0\leq k\leq h-1$) and $e^{q+1}=-l$. Then we have the following lemma, which has been shown in [\ref{FF1}]. We give proof in order
to make the paper self completeness.

\begin{lemma}([\ref{FF1}, Theorem 3])\label{thmB}
The identity $$\langle\overrightarrow{a}_1^{qi+j},\overrightarrow{v}_1^{q+1}\rangle=0$$ holds for all $0\leq i,j\leq \frac{s+h}{2}\cdot\frac{q+1}{s}-2$.
\end{lemma}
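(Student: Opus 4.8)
The plan is to evaluate $\langle\overrightarrow{a}_1^{qi+j},\overrightarrow{v}_1^{q+1}\rangle$ directly from the coordinate descriptions of the two vectors and reduce everything to the two defining relations of the system (\ref{equB}). First I would separate the leading coordinate (where $a=0$ and $v=e$) from the $h$ consecutive blocks indexed by $k=0,\ldots,h-1$, each block having the shape $(g^k,g^k\delta,\ldots,g^k\delta^{l-1})$ in $\overrightarrow{a}_1$ and a constant value $v_k$ in $\overrightarrow{v}_1$. Writing $N=qi+j$ and using $v_k^{q+1}=u_k$ and $e^{q+1}=-l$, the inner product becomes
\begin{equation*}
\langle\overrightarrow{a}_1^{N},\overrightarrow{v}_1^{q+1}\rangle = 0^N\cdot(-l)+\sum_{k=0}^{h-1}u_k\,g^{kN}\sum_{p=0}^{l-1}\delta^{pN},
\end{equation*}
with the convention $0^0=1$ at the leading coordinate.

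The key observation is that $\delta=g^s$ has multiplicative order exactly $l=\frac{q^2-1}{s}$, so the inner sum $\sum_{p=0}^{l-1}\delta^{pN}$ is a geometric series equal to $l$ when $l\mid N$ and to $0$ otherwise. This splits the argument into three cases. When $N=0$ (that is, $i=j=0$) the leading term contributes $-l$ and every block contributes $l\,u_k$, so the total equals $l\big(\sum_{k}u_k-1\big)=0$ by the first equation of (\ref{equB}). When $N>0$ and $l\nmid N$, the leading term vanishes and every inner geometric sum vanishes, so the total is trivially $0$.

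The only remaining case is $N>0$ with $l\mid N$, where the total reduces to $l\sum_{k=0}^{h-1}u_k\,g^{kN}$. Since $s$ and $h$ are both odd in this section, $\lfloor\frac{s+h}{2}\rfloor=\frac{s+h}{2}$, so the hypothesis $0\le i,j\le\frac{s+h}{2}\cdot\frac{q+1}{s}-2$ is exactly the range covered by Lemma \ref{lemC}(ii). Applying that lemma (with $(i,j)\neq(0,0)$) forces $N=qi+j=\mu l$ for some integer $\mu$ with $\frac{s-h}{2}+1\le\mu\le\frac{s+h}{2}-1$. Then $g^{kN}=g^{k\mu l}$, and since the vector $\overrightarrow{u}$ supplied by Lemma \ref{lemD} satisfies $\sum_{k=0}^{h-1}g^{k\mu l}u_k=0$ for every $\mu$ in precisely this range, the total vanishes. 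This exhausts all cases and proves the identity.

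The main point to get right is that the two lemmas must interlock: the integers $\mu$ produced by the divisibility statement of Lemma \ref{lemC}(ii) have to coincide with those for which the linear system of Lemma \ref{lemD} has been solved. This alignment rests on the parity assumption $s\equiv h\pmod 2$ (both odd), which makes $\lfloor\frac{s+h}{2}\rfloor=\frac{s+h}{2}$ and $\lceil\frac{s-h}{2}\rceil=\frac{s-h}{2}$, and on the fact that the single vector $\overrightarrow{u}$ annihilates $\sum_k g^{k\mu l}u_k$ simultaneously for all admissible $\mu$. The remaining ingredients---the geometric-sum evaluation via $\mathrm{ord}(\delta)=l$ and the $0^0=1$ bookkeeping at the leading coordinate---are routine once this correspondence is in place.
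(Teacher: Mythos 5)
Your proposal is correct and follows essentially the same route as the paper's proof: split off the zero coordinate, evaluate the geometric sum $\sum_{\nu=0}^{l-1}\delta^{\nu(qi+j)}$ using $\mathrm{ord}(\delta)=l$, and in the surviving case $l\mid qi+j$ invoke Lemma \ref{lemC}(ii) to pin down $\mu$ in the range $\frac{s-h}{2}+1\leq\mu\leq\frac{s+h}{2}-1$ and Lemma \ref{lemD} to make $\sum_{k}g^{k\mu l}u_k$ vanish. The only cosmetic difference is that you fold the $(i,j)=(0,0)$ case into a single formula via the $0^0=1$ convention, whereas the paper treats it separately; the substance is identical.
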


\begin{proof}
When $(i,j)=(0,0)$,
$$\langle\overrightarrow{a}_1^{0},\overrightarrow{v}_1^{q+1}\rangle=e^{q+1}+l(v_{0}^{q+1}+\cdots+v_{h-1}^{q+1})=-l+l(u_0+\cdots+ u_{h-1})=0.$$
When $(i,j)\neq(0,0)$, since $\delta$ is of order $l$, then
$$\langle\overrightarrow{a}_1^{qi+j},\overrightarrow{v}_1^{q+1}\rangle=\sum\limits_{k=0}^{h-1}g^{k(qi+j)}v_{k}^{q+1} \sum\limits_{\nu=0}^{l-1}\delta^{\nu(qi+j)}=
\begin{cases}
0,& \text{$l\nmid qi+j$,}\\
l\cdot\sum\limits_{k=0}^{h-1}g^{k(qi+j)}v_{k}^{q+1},& \text{$l\mid qi+j$.}
\end{cases}$$
We consider the case $l\mid qi+j$. According to Lemma \ref{lemC} (ii) and Lemma \ref{lemD},
$$\langle\overrightarrow{a}_1^{qi+j},\overrightarrow{v}_1^{q+1}\rangle=\langle\overrightarrow{a}_1^{\mu l},\overrightarrow{v}_1^{q+1}\rangle
=l\cdot\sum\limits_{k=0}^{h-1}g^{k\mu l}v_{k}^{q+1}=l\cdot\sum\limits_{k=0}^{h-1}g^{k\mu l}u_{k}=0.$$ Therefore, the result holds.
\end{proof}

For the second part of $\overrightarrow{a}$ and $\overrightarrow{v}$,
we choose $$\overrightarrow{a}_2=(1,\theta,\ldots,\theta^{m-1},g,g\theta,\ldots,g\theta^{m-1}, \ldots,g^{r-1},g^{r-1}\theta,\ldots,g^{r-1}\theta^{m-1})$$
and $$\overrightarrow{v}_2=(1,g^{\frac{t}{2}},\ldots,g^{(m-1)\cdot\frac{t}{2}},1,g^{\frac{t}{2}}, \ldots,g^{(m-1)\cdot\frac{t}{2}},\ldots,1,g^{\frac{t}{2}},\ldots,g^{(m-1)\cdot\frac{t}{2}}).$$
Then the following lemma can be obtained.

\begin{lemma}\label{thmC}
The identity $$\langle\overrightarrow{a}_2^{qi+j},\overrightarrow{v}_2^{q+1}\rangle=0$$ holds for all $0\leq i,j\leq\frac{q+1}{2}+\frac{q-1}{t}-2$.
\end{lemma}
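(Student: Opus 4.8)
The plan is to mirror the structure of the proof of Lemma \ref{thmB}, exploiting the fact that $\overrightarrow{a}_2$ is organized into $r$ blocks, the $k$-th block being $g^k$ times the full cyclic group $\langle\theta\rangle=\{1,\theta,\ldots,\theta^{m-1}\}$ of order $m$. Writing out the Hermitian inner product and grouping by blocks, I expect to obtain
\begin{equation*}
\langle\overrightarrow{a}_2^{qi+j},\overrightarrow{v}_2^{q+1}\rangle
=\sum_{k=0}^{r-1}g^{k(qi+j)}\sum_{\nu=0}^{m-1}\theta^{\nu(qi+j)}\,\left(g^{\nu\cdot\frac{t}{2}}\right)^{q+1}.
\end{equation*}
The key observation is that the $\overrightarrow{v}_2$-weights contribute an extra factor inside the inner sum: since $\left(g^{\nu t/2}\right)^{q+1}=g^{\nu\cdot\frac{t}{2}(q+1)}=\theta^{\nu\cdot\frac{q+1}{2}}$ (using $\theta=g^t$), each block collapses to the character sum $\sum_{\nu=0}^{m-1}\theta^{\nu\left(qi+j+\frac{q+1}{2}\right)}$.

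First I would verify this simplification carefully, checking that $\frac{q+1}{2}$ is an integer (it is, since $q$ is an odd prime power — here $t$ even forces $q$ odd, as $t\mid q-1$ and $t\geq 2$) and that the exponent $\nu\cdot\frac{t}{2}(q+1)$ reduces correctly modulo $q^2-1$. Second, I would invoke Lemma \ref{yC} directly: for all $0\leq i,j\leq\frac{q+1}{2}+\frac{q-1}{t}-2$, that lemma gives exactly $\sum_{\nu=0}^{m-1}\theta^{\nu\left(qi+j+\frac{q+1}{2}\right)}=0$. Hence every inner block sum vanishes, and therefore the whole expression vanishes regardless of the outer factors $g^{k(qi+j)}$. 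This is the crux: the vanishing is forced \emph{blockwise} by the choice of $\overrightarrow{v}_2$, so the outer sum over $k$ (the contribution of the different cosets $g^k\langle\theta\rangle$) never even needs to be analyzed.

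The main obstacle I anticipate is purely bookkeeping rather than conceptual: confirming that the twist introduced by $\overrightarrow{v}_2^{q+1}$ lands precisely on the shift $+\frac{q+1}{2}$ needed to match the hypothesis of Lemma \ref{yC}. Unlike Lemma \ref{thmB}, where $\overrightarrow{v}_1$ was essentially constant on each $\langle\delta\rangle$-block (so the weights factored out of the cyclic sum), here the weights $g^{\nu t/2}$ vary \emph{within} each block and must be absorbed into the character sum itself. I would therefore double-check the exponent arithmetic $\frac{t}{2}(q+1)\equiv\frac{q+1}{2}\pmod{m}$ modulo the order $m=\frac{q^2-1}{t}$ of $\theta$, since that congruence is what guarantees $\left(g^{\nu t/2}\right)^{q+1}=\theta^{\nu\cdot\frac{q+1}{2}}$ as elements. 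Once that identity is nailed down, the conclusion follows immediately from Lemma \ref{yC} with no further case analysis, so I do not expect to grind through the $j+\frac{q+1}{2}\leq q-1$ versus $j+\frac{q+1}{2}\geq q$ split — that work is already encapsulated in Lemma \ref{yC}.
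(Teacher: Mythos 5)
Your proposal is correct and follows essentially the same route as the paper's proof: expand the inner product over the $r$ cosets $g^k\langle\theta\rangle$, rewrite the weights as $\left(g^{\nu\cdot\frac{t}{2}}\right)^{q+1}=\theta^{\nu\cdot\frac{q+1}{2}}$, and conclude that each block sum $\sum_{\nu=0}^{m-1}\theta^{\nu\left(qi+j+\frac{q+1}{2}\right)}$ vanishes by Lemma \ref{yC}, so no further case analysis is required. The only quibble is your side remark about verifying $\frac{t}{2}(q+1)\equiv\frac{q+1}{2}\pmod{m}$: no such congruence is needed (and as literally written it is generally false), since $\frac{t}{2}(q+1)=t\cdot\frac{q+1}{2}$ holds exactly as integers, which already gives $g^{\nu\cdot\frac{t}{2}(q+1)}=\left(g^{t}\right)^{\nu\cdot\frac{q+1}{2}}=\theta^{\nu\cdot\frac{q+1}{2}}$.
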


\begin{proof}
By Lemma \ref{yC}, we can calculate directly,
\begin{equation}
\begin{aligned}
\langle\overrightarrow{a}_2^{qi+j},\overrightarrow{v}_2^{q+1}\rangle&=\sum\limits_{k=0}^{r-1}\sum\limits_{\nu=0}^{m-1}
(g^k\theta^{\nu})^{qi+j}\cdot\theta^{\nu\cdot\frac{q+1}{2}}&\\
&=\sum\limits_{k=0}^{r-1}g^{k(qi+j)}\sum\limits_{\nu=0}^{m-1}\theta^{\nu(qi+j+\frac{q+1}{2})}&\\
&=0.&
\end{aligned}
\end{equation}
\end{proof}

Now, we give our first construction.

\begin{theorem}\label{Thm}
Let $n=1+lh+mr-\frac{q^2-1}{st}\cdot hr$, where odd $s\mid q+1$, even $t\mid q-1$, $t\geq 2$, $l=\frac{q^2-1}{s}$, $m=\frac{q^2-1}{t}$, odd $h\leq s-1$
and $r\leq t$. If $q-1>\frac{q^2-1}{st}\cdot hr$, then for any $1\leq d\leq min\{\frac{s+h}{2}\cdot\frac{q+1}{s}-1,\frac{q+1}{2}+\frac{q-1}{t}-1\}$,
there exists an $[[n,n-2d,d+1]]_q$ quantum MDS code.
\end{theorem}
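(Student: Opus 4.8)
The plan is to build, for each admissible $d$, an $[n,d,n-d+1]_{q^2}$ MDS code that is Hermitian self-orthogonal; Theorem~\ref{thmA} then yields the desired $[[n,n-2d,d+1]]_q$ quantum MDS code. Since every $\mathbf{GRS}_{d}(\overrightarrow{a},\overrightarrow{v})$ is already $[n,d,n-d+1]_{q^2}$ MDS, by Theorem~\ref{y2} it suffices to produce a length-$n$ evaluation vector $\overrightarrow{a}$ with distinct entries and a vector $\overrightarrow{v}\in(\F_{q^2}^*)^n$ with $\langle\overrightarrow{a}^{\,qi+j},\overrightarrow{v}^{\,q+1}\rangle=0$ for all $0\le i,j\le d-1$. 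The two ingredients $(\overrightarrow{a}_1,\overrightarrow{v}_1)$ and $(\overrightarrow{a}_2,\overrightarrow{v}_2)$ of Lemmas~\ref{thmB} and~\ref{thmC} already satisfy this identity for $0\le i,j\le\frac{s+h}{2}\cdot\frac{q+1}{s}-2$ and for $0\le i,j\le\frac{q+1}{2}+\frac{q-1}{t}-2$, respectively, so the stated bound on $d$ places $d-1$ inside both ranges. The remaining task is to glue the two pieces into one GRS code.

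First I would assemble $\overrightarrow{a}$ by listing the entries of $\overrightarrow{a}_1$ followed by those entries of $\overrightarrow{a}_2$ not already present, so that $\overrightarrow{a}$ has distinct coordinates. The coordinates of $\overrightarrow{a}_1$ are $0$ together with $g^{k_1}\delta^{\nu_1}=g^{k_1+s\nu_1}$ ($0\le k_1\le h-1$), and those of $\overrightarrow{a}_2$ are $g^{k_2}\theta^{\nu_2}=g^{k_2+t\nu_2}$ ($0\le k_2\le r-1$); note $\gcd(s,t)=1$, since $s$ is odd and $\gcd(q+1,q-1)\mid 2$. A coordinate is shared exactly when $k_1+s\nu_1\equiv k_2+t\nu_2\,(\mathrm{mod}\,q^2-1)$, and Lemma~\ref{y1} shows that for each of the $hr$ pairs $(k_1,k_2)$ there are precisely $\frac{q^2-1}{st}$ such coincidences. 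Hence the number of repeated elements is $\frac{q^2-1}{st}\cdot hr$, and the length is $(1+lh)+mr-\frac{q^2-1}{st}\cdot hr=n$, as required.

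To define $\overrightarrow{v}$ I would introduce a free scalar $\lambda\in\F_{q^2}^*$ and replace $\overrightarrow{v}_2$ by $\lambda\overrightarrow{v}_2$; since the identity of Lemma~\ref{thmC} is homogeneous, $\langle\overrightarrow{a}_2^{\,qi+j},(\lambda\overrightarrow{v}_2)^{\,q+1}\rangle=\lambda^{q+1}\langle\overrightarrow{a}_2^{\,qi+j},\overrightarrow{v}_2^{\,q+1}\rangle=0$ is preserved. At a coordinate occurring in only one piece I set $v^{q+1}$ equal to the corresponding entry of $\overrightarrow{v}_1^{\,q+1}$ or of $(\lambda\overrightarrow{v}_2)^{q+1}$ (these are nonzero automatically); at a shared coordinate, with first index $k_1$ and second index $\nu_2$, I set $v^{q+1}=u_{k_1}+\lambda^{q+1}\zeta^{\nu_2}$, where $\zeta=g^{(q+1)t/2}\in\F_q^*$ is the value such that the $\overrightarrow{v}_2^{\,q+1}$-entry at second index $\nu_2$ equals $\zeta^{\nu_2}$. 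With this choice the Hermitian inner product splits coordinatewise back into the two full sums, giving $\langle\overrightarrow{a}^{\,qi+j},\overrightarrow{v}^{\,q+1}\rangle=\langle\overrightarrow{a}_1^{\,qi+j},\overrightarrow{v}_1^{\,q+1}\rangle+\lambda^{q+1}\langle\overrightarrow{a}_2^{\,qi+j},\overrightarrow{v}_2^{\,q+1}\rangle=0$ throughout the required range, so Theorem~\ref{y2} will apply.

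The main obstacle is exactly to guarantee $\overrightarrow{v}\in(\F_{q^2}^*)^n$, i.e.\ that no merged value $u_{k_1}+\lambda^{q+1}\zeta^{\nu_2}$ vanishes (such a value lies in $\F_q$, and if nonzero it is realized as some $v^{q+1}$ because the norm map $\F_{q^2}^*\to\F_q^*$ is surjective). Each shared coordinate forbids the single value $\lambda^{q+1}=-u_{k_1}\zeta^{-\nu_2}\in\F_q^*$; as $\lambda$ ranges over $\F_{q^2}^*$ the norm $\lambda^{q+1}$ attains every element of $\F_q^*$, giving $q-1$ candidates of which at most $\frac{q^2-1}{st}\cdot hr$ are excluded. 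This is precisely where the hypothesis $q-1>\frac{q^2-1}{st}\cdot hr$ is used: it leaves at least one admissible value of $\lambda^{q+1}$, hence a $\lambda$ making every entry of $\overrightarrow{v}$ nonzero. Fixing such a $\lambda$ completes the construction of an MDS Hermitian self-orthogonal $\mathbf{GRS}_{d}(\overrightarrow{a},\overrightarrow{v})$, and Theorem~\ref{thmA} delivers the claimed $[[n,n-2d,d+1]]_q$ quantum MDS code.
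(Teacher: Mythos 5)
Your proposal is correct and follows essentially the same route as the paper's own proof: the same decomposition of the evaluation set into $A\setminus B$, $B\setminus A$ and $A\cap B$, the same use of Lemma \ref{y1} to count the $\frac{q^2-1}{st}\cdot hr$ overlaps, the same pigeonhole choice of the scalar (your $\lambda^{q+1}$ ranging over $\F_q^*$ is equivalent to the paper's $\lambda\in\F_q^*$), and the same final appeal to Lemmas \ref{thmB}, \ref{thmC} and Theorems \ref{y2}, \ref{thmA}. The only differences are that you make explicit two points the paper leaves implicit, namely $\gcd(s,t)=1$ and the surjectivity of the norm map needed to realize the prescribed values as $(q+1)$-th powers.
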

\begin{proof}
Denote by \[A=\{g^\alpha\delta^i|0\leq\alpha\leq h-1,0\leq i\leq l-1\}\] and \[B=\{g^\beta\theta^j|0\leq\beta\leq r-1,0\leq j\leq m-1\}.\]
From Lemma \ref{y1}, we know $|A\cap B|=\frac{q^2-1}{st}\cdot hr$. Let $A_1=A-B$ and $B_1=B-A$. Define
\[
\begin{array}{lcr}
f_1:A\cup\{0\}\rightarrow \mathbb{F}_q^*,\,\,f_1(g^\alpha\delta^i)=v_\alpha^{q+1}\,\, \text{and}\,\, f_1(0)=-l,\\[2mm]
f_2:B\rightarrow \mathbb{F}_q^*,\,\,f_2(g^\beta\theta^j)=\theta^{j\cdot\frac{q+1}{2}}.\\[2mm]
\end{array}
\]
Let $$\overrightarrow{a}=(0,\overrightarrow{a}_{A_1},\overrightarrow{a}_{B_1},\overrightarrow{a}_{A\cap B}),$$
where $\overrightarrow{a}_{S}=(a_1,\ldots,a_k)$ for $S=\{a_1,\ldots,a_k\}$ and
$$\overrightarrow{v}^{q+1}=(-l, f_1(\overrightarrow{a}_{A_1}),\lambda f_2(\overrightarrow{a}_{B_1}),f_1(\overrightarrow{a}_{A\cap B})+\lambda f_2(\overrightarrow{a}_{A\cap B})),$$
where $\lambda\in\mathbb{F}_q^*$ and $f_j(\overrightarrow{a}_{S})=(f_j(a_1),\ldots,f_j(a_k))$ with $S=\{a_1,\ldots,a_k\}$ and $j=1,2$.

Indeed, since $q-1>\frac{q^2-1}{st}\cdot hr=|A\cap B|$, then there exists $\lambda\in\mathbb{F}_q^*$ such that all coordinates of
$f_1(\overrightarrow{a}_{A\cap B})+\lambda f_2(\overrightarrow{a}_{A\cap B})$ are nonzero.

According to Lemmas \ref{thmB} and \ref{thmC}, it takes
$$\langle\overrightarrow{a}^{qi+j},\overrightarrow{v}^{q+1}\rangle=\langle\overrightarrow{a}_1^{qi+j},\overrightarrow{v}_1^{q+1}\rangle +\lambda\langle\overrightarrow{a}_2^{qi+j},\overrightarrow{v}_2^{q+1}\rangle=0,$$
for any $0\leq i,j\leq d-1$. As a consequence, by Theorem \ref{y2}, $\mathbf{GRS}_{d}(\overrightarrow{a},\overrightarrow{v})$ is Hermitian self-orthogonal.
Therefore, by Theorem \ref{thmA}, there exists an $[[n,n-2d,d+1]]_q$ quantum MDS code, where
$n=1+lh+mr-\frac{q^2-1}{st}\cdot hr$ and $1\leq d\leq min\{\frac{s+h}{2}\cdot\frac{q+1}{s}-1,\frac{q+1}{2}+\frac{q-1}{t}-1\}$.
\end{proof}

\begin{remark}\label{R1}
We try to choose $s,h,t$ such that $\frac{s+h}{2}\cdot\frac{q+1}{s}-1\approx \frac{q+1}{2}+\frac{q-1}{t}-1$. For large $q$,
we take $s\approx\frac{1}{2}\sqrt{2(q+1)}\cdot h$ and $t\approx\sqrt{2(q+1)}$. Then it follows that
$$\frac{s+h}{2}\cdot\frac{q+1}{s}-1\approx\frac{q}{2}+\sqrt{\frac{q}{2}}\,\,\,\, \text{and}\,\,\,\,\frac{q+1}{2}+\frac{q-1}{t}-1\approx\frac{q}{2}+\sqrt{\frac{q}{2}}.$$
This indicates that the minimum distance of the quantum MDS code in Theorem \ref{Thm} can reach $\frac{q}{2}+\sqrt{\frac{q}{2}}$ approximately.
\end{remark}

\begin{example}
Let $q=641$. Choose $s=107$, $t=32$, $h=5$ and $r=1$. In this case, one has $\frac{s+h}{2s}\cdot(q+1)-1=341$ and
$\frac{q+1}{2}+\frac{q-1}{t}-1=340\approx\frac{q}{2}+\sqrt{\frac{q}{2}}=338.4$. The length is $n=1+lh+mr-\frac{q^2-1}{st}\cdot hr=16081$.
There exists $[[16081,15401,341]]_{641}$ quantum MDS code, which has not been covered in any previous work.
\end{example}

\section{Quantum MDS Codes of Length $n=lh+mr-\frac{q^2-1}{st}\cdot hr$}

\quad\; In this section, we assume $\mathbf{s\,\, is\,\, odd}$,
$\mathbf{h\leq s-1}$ and $\mathbf{r\leq t}$. Now, we consider the first part of coordinates in vectors $\overrightarrow{a}$ and $\overrightarrow{v}$. Firstly, we give two useful lemmas, that are
Lemmas \ref{lemDD} and \ref{thmBB}, which generalize Lemma 13 and Theorem 5 in [\ref{FF1}], respectively.

\begin{lemma}\label{lemDD}
There exists a solution in $(\mathbb{F}_{q}^*)^{h}$ of the following system of equations
\begin{equation}\label{equBB}
\begin{aligned}
\sum\limits_{k=0}^{h-1}g^{k(\mu l-q-1)}u_{k}=0
\end{aligned}
\end{equation}
for $\lceil\frac{s-h}{2}\rceil+1\leq\mu\leq\lfloor\frac{s+h}{2}\rfloor-1$.
\end{lemma}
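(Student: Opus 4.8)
The plan is to reduce the given system to a Vandermonde-type vanishing condition and then read off a solution from a single well-chosen polynomial, paralleling the Cramer's-rule argument of Lemma~\ref{lemD} but replacing its normalization row by a monic choice. Writing $g^{k(\mu l-q-1)}=\xi^{k\mu}g^{-k(q+1)}$ with $\xi:=g^{l}$, and substituting $w_{k}:=g^{-k(q+1)}u_{k}$, the system (\ref{equBB}) becomes $\sum_{k=0}^{h-1}\xi^{k\mu}w_{k}=0$ for every $\mu$ in the range $R:=\{\lceil\frac{s-h}{2}\rceil+1,\ldots,\lfloor\frac{s+h}{2}\rfloor-1\}$. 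Since $g^{q+1}$ generates $\F_{q}^{*}$ (its order is $(q^{2}-1)/(q+1)=q-1$), we have $g^{k(q+1)}\in\F_{q}^{*}$, so $u_{k}\in\F_{q}^{*}$ if and only if $w_{k}\in\F_{q}^{*}$; hence it suffices to produce $w=(w_{0},\ldots,w_{h-1})\in(\F_{q}^{*})^{h}$ solving the homogeneous system. Note that $\xi$ has order exactly $s$ and, because $s\mid q+1$, satisfies $\xi^{q}=\xi^{-1}$.

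The homogeneous system says precisely that the polynomial $W(x):=\sum_{k=0}^{h-1}w_{k}x^{k}$ of degree at most $h-1$ vanishes at the $|R|$ distinct points $\xi^{\mu}$, $\mu\in R$ (distinct because $R\subseteq\{1,\ldots,s-1\}$ and $\xi$ has order $s$). A short count using that $s$ is odd gives $|R|=h-1$ when $h$ is odd and $|R|=h-2$ when $h$ is even. I would therefore build $W$ as a monic multiple of $P(x):=\prod_{\mu\in R}(x-\xi^{\mu})$. The key point for $\F_{q}$-rationality is the symmetry $s-R=R$: the reflection $\mu\mapsto s-\mu$ fixes $R$ setwise, swapping the endpoints and preserving the interval of consecutive integers, as one checks separately for the two parities of $h$ via the ceiling and floor. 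Applying the Frobenius $x\mapsto x^{q}$ to the coefficients of $P$ replaces each root $\xi^{\mu}$ by $\xi^{\mu q}=\xi^{-\mu}=\xi^{s-\mu}$; since $s-R=R$, this merely permutes the roots, so $P$ is fixed by Frobenius, whence $P\in\F_{q}[x]$.

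Next I would show every coefficient of $P$ is nonzero. The roots $\xi^{\mu}$, $\mu\in R$, form a geometric progression with common ratio $\xi$, so up to a power of $\xi$ and a sign the elementary symmetric functions are Gaussian binomial coefficients $\binom{|R|}{j}_{\xi}$; as $\xi$ has order $s>|R|$, no factor $\xi^{i}-1$ with $1\le i\le|R|$ vanishes, so every $\binom{|R|}{j}_{\xi}\neq0$ and all coefficients of $P$ are nonzero. When $h$ is odd, $\deg P=h-1$, so $W:=P$ already has exactly $h$ nonzero coefficients lying in $\F_{q}$, and setting $u_{k}:=g^{k(q+1)}w_{k}$ finishes the proof. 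When $h$ is even, $\deg P=h-2$, and I would take $W(x):=(x-c)P(x)$ with $c\in\F_{q}^{*}$; writing $P(x)=\sum_{k=0}^{h-2}p_{k}x^{k}$ (all $p_{k}\in\F_{q}^{*}$, $p_{h-2}=1$), the coefficients of $W$ are $w_{k}=p_{k-1}-c\,p_{k}$, which are all nonzero precisely when $c\neq0$ and $c\neq p_{k-1}/p_{k}$ for $1\le k\le h-2$. These are at most $h-2$ forbidden values in $\F_{q}^{*}$, and since $h\le s-1\le q$ we have $q-1\ge h-1>h-2$, so an admissible $c$ exists; again $u_{k}:=g^{k(q+1)}w_{k}\in\F_{q}^{*}$.

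The main obstacle is the even case: existence of a nonzero solution is automatic, since the system is homogeneous with more unknowns than equations, but forcing all coordinates to be simultaneously nonzero and $\F_{q}$-rational requires the extra linear factor $x-c$ together with the counting argument above, and one must confirm the room estimate $q-1>h-2$ even in the extremal configuration $h=s-1$ with $s=q+1$. A secondary technical point, and precisely where the hypothesis that $s$ is odd enters, is the clean verification of $s-R=R$ and of $|R|\in\{h-1,h-2\}$ across both parities of $h$ using the ceiling and floor in the definition of $R$.
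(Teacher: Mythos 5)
Your proposal is correct, but it takes a genuinely different route from the paper. The paper stays in linear algebra: for odd $h$ it augments the $(h-1)\times h$ coefficient matrix by a row of ones to get an invertible $h\times h$ matrix, solves by Cramer's rule exactly as in Lemma~\ref{lemD}, and reads off $\mathbb{F}_q^*$-membership from the Frobenius symmetry of the determinants; for even $h$ it deletes the first and last columns to get two $(h-2)\times(h-1)$ systems, solves each in $(\mathbb{F}_q^*)^{h-1}$ by the odd-case argument, and splices the two solutions as $(0,\overrightarrow{x})-\lambda(\overrightarrow{y},0)$ with $\lambda\in\mathbb{F}_q^*$ avoiding at most $h-2$ ratios. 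You instead pass to the polynomial picture: after the (correct) substitution $w_k=g^{-k(q+1)}u_k$, a solution is precisely the coefficient vector of a polynomial of degree at most $h-1$ vanishing on $\{\xi^{\mu}:\mu\in R\}$, and you exhibit one explicitly as $P(x)=\prod_{\mu\in R}(x-\xi^{\mu})$ (times a generic linear factor $x-c$ when $h$ is even). Your $\mathbb{F}_q$-rationality argument (Galois stability of the root set via $\xi^q=\xi^{-1}$ and the reflection symmetry $s-R=R$) is the same symmetry the paper exploits, just phrased on roots rather than on determinant rows; but your nonvanishing argument is new relative to the paper: the coefficients of $P$ are, up to nonzero factors, Gaussian binomials $\binom{|R|}{j}_{\xi}$, which cannot vanish because $\xi$ has order $s>|R|$ — the paper gets nonvanishing from Vandermonde determinants instead. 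Both proofs handle even $h$ by a one-parameter counting argument over $\mathbb{F}_q^*$ with at most $h-2$ forbidden values, and your room estimate $q-1\ge h-1>h-2$ (from $h\le s-1\le q$) matches the paper's implicit use of $h<q+1$. What your approach buys is an explicit, self-contained solution and a cleaner unification of rationality and nonvanishing; what the paper's buys is direct reuse of the Cramer's-rule computation already set up in Lemma~\ref{lemD}, so the incremental proof is shorter in context. All the details I checked (the identity $|R|=h-1$ or $h-2$ according to the parity of $h$, distinctness of the $\xi^{\mu}$, the edge cases $h=1,2$, and the Frobenius fixedness of $P$) go through.
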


\begin{proof}
Let $\xi=g^l$, $\eta=g^{-q-1}\in\mathbb{F}_q^*$ and $c=\lceil\frac{s-h}{2}\rceil+1$. It is clear that
$\xi^{c+\nu}\neq\xi^{c+\nu'}$ for any $0\leq\nu\neq\nu'\leq h-2<s-2$.  We discuss in two cases.

{\bf Case 1}: $h$ is odd. In this case, $\lceil\frac{s-h}{2}\rceil=\frac{s-h}{2}$ and $\lfloor\frac{s+h}{2}\rfloor=\frac{s+h}{2}$.
The system of equations (\ref{equBB}) can be expressed in the matrix form
\begin{equation}\label{equC}
\begin{aligned}
A\overrightarrow{u}^{T}=(0,0,\ldots,0)^T,
\end{aligned}
\end{equation}
where \begin{equation*}
A=
\left(
  \begin{array}{ccccc}
   1 & \xi^{c}\eta & \xi^{2c}\eta^2 & \cdots & \xi^{(h-1)c}\eta^{h-1} \\
   1 & \xi^{c+1}\eta & \xi^{2(c+1)}\eta^2 & \cdots & \xi^{(h-1)(c+1)}\eta^{h-1} \\
    \vdots &  \vdots & \vdots & \ddots & \vdots \\
   1 & \xi^{c+h-2}\eta & \xi^{2(c+h-2)}\eta^2 & \cdots & \xi^{(h-1)(c+h-2)}\eta^{h-1} \\
\end{array}
\right)
\end{equation*}
is an $(h-1)\times h$ matrix over $\mathbb{F}_{q^2}$ and $$\overrightarrow{u}=(u_{0},u_{1},\ldots,u_{h-1}).$$
It is obvious that $\mathrm{rank}(A)=h-1$. We will show that $u_k\in\F_q^*$ for any $0\leq k\leq h-1$.

Let \begin{equation*}
A'=
\left(
  \begin{array}{ccccc}
   1 & 1 & 1 & \cdots & 1 \\
   1 & \xi^{c}\eta & \xi^{2c}\eta^2 & \cdots & \xi^{(h-1)c}\eta^{h-1} \\
   1 & \xi^{c+1}\eta & \xi^{2(c+1)}\eta^2 & \cdots & \xi^{(h-1)(c+1)}\eta^{h-1} \\
    \vdots &  \vdots & \vdots & \ddots & \vdots \\
   1 & \xi^{c+h-2}\eta & \xi^{2(c+h-2)}\eta^2 & \cdots & \xi^{(h-1)(c+h-2)}\eta^{h-1} \\
\end{array}
\right).
\end{equation*}
We consider the equations
\begin{equation}\label{equC'}
\begin{aligned}
A'\overrightarrow{u}^{T}=(1,0,0,\ldots,0)^T.
\end{aligned}
\end{equation}
It is easy to check that $A'$ is row equivalent to $A'^{(q)}$ and $\det(A')\neq 0$. Similarly as the proof of Lemma \ref{lemD},
we obtain (\ref{equC'}) has a solution $\overrightarrow{u}=(u_{0},u_{1},\ldots,u_{h-1})\in(\mathbb{F}_{q}^*)^{h}$.
Since the solution of (\ref{equC'}) is also the solution of (\ref{equC}), the result has been proved.

{\bf Case 2}: $h$ is even. In this case, $\lceil\frac{s-h}{2}\rceil=\frac{s-h+1}{2}$ and $\lfloor\frac{s+h}{2}\rfloor=\frac{s+h-1}{2}$.
The system of equations (\ref{equBB}) can be expressed in the matrix form
\begin{equation}\label{equCC}
\begin{aligned}
A\overrightarrow{u}^{T}=(0,0,\ldots,0)^T,
\end{aligned}
\end{equation}
where \begin{equation*}
A=
\left(
  \begin{array}{ccccc}
   1 & \xi^{c}\eta & \xi^{2c}\eta^2 & \cdots & \xi^{(h-1)c}\eta^{h-1} \\
   1 & \xi^{c+1}\eta & \xi^{2(c+1)}\eta^2 & \cdots & \xi^{(h-1)(c+1)}\eta^{h-1} \\
    \vdots &  \vdots & \vdots & \ddots & \vdots \\
   1 & \xi^{c+h-3}\eta & \xi^{2(c+h-3)}\eta^2 & \cdots & \xi^{(h-1)(c+h-3)}\eta^{h-1} \\
\end{array}
\right)
\end{equation*}
is an $(h-2)\times h$ matrix over $\mathbb{F}_{q^2}$. By $s\mid q+1$ and $\xi^s=1$, it takes
$$\left(\xi^{k(c+\nu)}\eta^k\right)^q=\xi^{-k\left(\frac{s-h+1}{2}+1+\nu\right)}\eta^k
=\xi^{k\left(\frac{s+h-1}{2}-1-\nu\right)}\eta^k=\xi^{k(c+h-3-\nu)}\eta^k,$$
for any $0\leq k\leq h-1$ and $0\leq \nu\leq h-3$. Therefore, $A$ and $A^{(q)}$ are row equivalent.
By deleting the first (resp. the last) column of $A$ and we obtain an $(h-2)\times(h-1)$ matrix denote by $A_0$ (resp. $A_{h-1}$). Then
$A_0$ (resp. $A_{h-1}$) is row equivalent to $A_0^{(q)}$ (resp. $A_{h-1}^{(q)}$). Obviously, $\mathrm{rank}(A_0)=\mathrm{rank}(A_{h-1})=h-2$.
Similarly as the proof of Case 1, we can deduce that the following equations
$$A_0\overrightarrow{x}^T=(0,\ldots,0)^T,A_{h-1}\overrightarrow{y}^T=(0,\ldots,0)^T$$
have two solutions $\overrightarrow{x}=(x_1,x_2,\ldots,x_{h-1}),\overrightarrow{y}=(y_0,y_1,\ldots,y_{h-2})\in(\mathbb{F}_q^*)^{h-1}$.
From $h<q+1$, there exists $\lambda\in\mathbb{F}_q^*\setminus\{\frac{x_1}{y_1},\ldots,\frac{x_{h-2}}{y_{h-2}}\}$ such that
$\overrightarrow{u}=(0,\overrightarrow{x})-\lambda(\overrightarrow{y},0)\in(\mathbb{F}_q^*)^h$. Then it implies
$$A\overrightarrow{u}^T=\left( \begin{array}{c}
0 \\
A_0\overrightarrow{x}^T \\
\end{array} \right)
-\lambda\left( \begin{array}{c} A_{h-1}\overrightarrow{y}^T \\
 0 \\
 \end{array} \right)=(0,0,\ldots,0)^T.$$
Therefore, the result has been proved.
\end{proof}

We choose
$$\overrightarrow{a}_1=(1,\delta,\ldots,\delta^{l-1},g,g\delta,\ldots,g\delta^{l-1},\ldots, g^{h-1},g^{h-1}\delta,\ldots,g^{h-1}\delta^{l-1})$$ and
$$\overrightarrow{v}_1=(v_0,v_0\delta,\ldots,v_0\delta^{l-1},v_1,v_1\delta,\ldots,v_1\delta^{l-1},\ldots, v_{h-1},v_{h-1}\delta,\ldots,v_{h-1}\delta^{l-1}),$$
where $v_{k}^{q+1}=u_k$($0\leq k\leq h-1$) and $\overrightarrow{u}=(u_0,u_1,\ldots,u_{h-1})$ satisfy (\ref{equBB}).

\begin{lemma}\label{thmBB}
The identity $$\langle\overrightarrow{a}_1^{qi+j},\overrightarrow{v}_1^{q+1}\rangle=0$$ holds for all
$0\leq i,j\leq \lfloor\frac{s+h}{2}\rfloor\cdot\frac{q+1}{s}-3$.
\end{lemma}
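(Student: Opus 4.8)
The plan is to mimic the computation in Lemma \ref{thmB}, collapsing the inner product into a geometric sum over the powers of $\delta$ and then reducing to the linear system of Lemma \ref{lemDD}. First I would index the coordinates of $\overrightarrow{a}_1$ and $\overrightarrow{v}_1$ by pairs $(k,\nu)$ with $0\leq k\leq h-1$ and $0\leq\nu\leq l-1$: the corresponding entries are $g^k\delta^\nu$ and $v_k\delta^\nu$. Raising $\overrightarrow{a}_1$ to the power $qi+j$ and $\overrightarrow{v}_1$ to the power $q+1$ (so that $v_k^{q+1}=u_k$), and collecting the $\delta$-terms, gives
$$\langle\overrightarrow{a}_1^{qi+j},\overrightarrow{v}_1^{q+1}\rangle
=\sum_{k=0}^{h-1}g^{k(qi+j)}u_k\sum_{\nu=0}^{l-1}\delta^{\nu(qi+j+q+1)}.$$
The key observation is that $\overrightarrow{v}_1$ now carries the factor $\delta^\nu$ on each coordinate (unlike the constant blocks in Lemma \ref{thmB}), so the exponent inside the inner sum is $qi+j+q+1$ rather than $qi+j$.

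Next, since $\delta=g^s$ has order $l$, the inner geometric sum over $\nu$ equals $l$ when $l\mid(qi+j+q+1)$ and $0$ otherwise. Thus the inner product vanishes automatically whenever $l\nmid(qi+j+q+1)$, and it remains only to treat the divisible case. Here I would invoke Lemma \ref{lemC}(i): in the range $0\leq i,j\leq\lfloor\frac{s+h}{2}\rfloor\cdot\frac{q+1}{s}-3$ one has $l\mid(qi+j+q+1)$ precisely when $qi+j+q+1=\mu l$ for some integer $\mu$ with $\lceil\frac{s-h}{2}\rceil+1\leq\mu\leq\lfloor\frac{s+h}{2}\rfloor-1$. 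Substituting $qi+j=\mu l-q-1$ turns the remaining sum into
$$\langle\overrightarrow{a}_1^{qi+j},\overrightarrow{v}_1^{q+1}\rangle
=l\cdot\sum_{k=0}^{h-1}g^{k(\mu l-q-1)}u_k,$$
which is exactly $l$ times the left-hand side of the system (\ref{equBB}) solved in Lemma \ref{lemDD} for this same range of $\mu$; hence it is $0$.

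I do not expect any genuinely hard step here: the result is a direct combination of the order of $\delta$, Lemma \ref{lemC}(i), and Lemma \ref{lemDD}. The only point requiring care is the exponent bookkeeping, namely verifying that the shift introduced by $v_k^{q+1}=u_k$ (which contributes the factor $\delta^{\nu(q+1)}$) is exactly the shift $+q+1$ that makes Lemma \ref{lemC}(i) applicable, and that the resulting exponent $\mu l-q-1$ together with the range $\lceil\frac{s-h}{2}\rceil+1\leq\mu\leq\lfloor\frac{s+h}{2}\rfloor-1$ matches the hypotheses of Lemma \ref{lemDD}. Once these ranges are confirmed to coincide, the proof closes immediately.
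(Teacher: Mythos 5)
Your proposal is correct and follows essentially the same route as the paper's proof: collapse the inner product into a geometric sum over powers of $\delta$ (with the exponent shifted to $qi+j+q+1$ by the factor $\delta^{\nu(q+1)}$ in $\overrightarrow{v}_1^{q+1}$), then apply Lemma \ref{lemC}(i) and Lemma \ref{lemDD} in the divisible case. In fact you spell out the computation that the paper compresses into ``Similarly as Lemma \ref{thmB},'' and your range bookkeeping for $\mu$ matches the hypotheses of Lemma \ref{lemDD} exactly.
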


\begin{proof}
Similarly as Lemma \ref{thmB}, we only need to consider the case $l\mid qi+j+q+1$. From Lemma \ref{lemC} (i) and Lemma \ref{lemDD}, we deduce that
$$\langle\overrightarrow{a}_1^{qi+j},\overrightarrow{v}_1^{q+1}\rangle=\langle\overrightarrow{a}_1^{\mu l-q-1},\overrightarrow{v}_1^{q+1}\rangle
=l\cdot\sum\limits_{k=0}^{h-1}g^{k(\mu l-q-1)}v_{k}^{q+1}=0.$$ Therefore,
for all $0\leq i,j\leq \lfloor\frac{s+h}{2}\rfloor\cdot\frac{q+1}{s}-3$, $$\langle\overrightarrow{a}_1^{qi+j},\overrightarrow{v}_1^{q+1}\rangle=0.$$
\end{proof}

The vectors $\overrightarrow{a}_2$ and $\overrightarrow{v}_2$ are the same as in Section 3.

\begin{theorem}\label{ThmD}
Let $n=lh+mr-\frac{q^2-1}{st}\cdot hr$, where odd $s\mid q+1$, even $t\mid q-1$, $t\geq 2$, $l=\frac{q^2-1}{s}$, $m=\frac{q^2-1}{t}$, $h\leq s-1$ and
$r\leq t$. Assume that $q-1>\frac{q^2-1}{st}\cdot hr$, then for any
$1\leq d\leq min\{\lfloor\frac{s+h}{2}\rfloor\cdot\frac{q+1}{s}-2,\frac{q+1}{2}+\frac{q-1}{t}-1\}$, there exists an $[[n,n-2d,d+1]]_q$ quantum MDS code.
\end{theorem}

\begin{proof}
Similarly as Theorem \ref{Thm}, we also let $A=\{g^\alpha\delta^i|0\leq\alpha\leq h-1,0\leq i\leq l-1\}$,
$B=\{g^\beta\theta^j|0\leq\beta\leq r-1,0\leq j\leq m-1\}$, $A_1=A-B$ and $B_1=B-A$. Define
\[
\begin{array}{lcr}
f_1:A\rightarrow \mathbb{F}_q^*,\,\,f_1(g^\alpha\delta^i)=(v_\alpha\delta^i)^{q+1},\\[2mm]
f_2:B\rightarrow \mathbb{F}_q^*,\,\,f_2(g^\beta\theta^j)=\theta^{j\cdot\frac{q+1}{2}}.\\[2mm]
\end{array}
\]
Let $$\overrightarrow{a}=(\overrightarrow{a}_{A_1},\overrightarrow{a}_{B_1},\overrightarrow{a}_{A\cap B}),$$
where $\overrightarrow{a}_{S}=(a_1,\ldots,a_k)$ for $S=\{a_1,\ldots,a_k\}$ and
$$\overrightarrow{v}^{q+1}=(f_1(\overrightarrow{a}_{A_1}),\lambda f_2(\overrightarrow{a}_{B_1}),f_1(\overrightarrow{a}_{A\cap B})+\lambda f_2(\overrightarrow{a}_{A\cap B})),$$
where  $\lambda\in\mathbb{F}_q^*$ is chosen such that all the coordinates of $f_1(\overrightarrow{a}_{A\cap B})+\lambda f_2(\overrightarrow{a}_{A\cap B})$ are nonezero  and $f_j(\overrightarrow{a}_{S})=(f_j(a_1),\ldots,f_j(a_k))$ with $S=\{a_1,\ldots,a_k\}$ for $j=1,2$.

According to Lemmas  \ref{thmC} and \ref{thmBB}, similarly as the proof of Theorem \ref{Thm}, $\mathbf{GRS}_{d}(\overrightarrow{a},\overrightarrow{v})$ is
Hermitian self-orthogonal. As a consequence, by Theorem \ref{thmA}, there exists $[[n,n-2d,d+1]]_q$ quantum MDS code, where
$n=lh+mr-\frac{q^2-1}{st}\cdot hr$ with odd $h$ and $1\leq d\leq min\{\lfloor\frac{s+h}{2}\rfloor\cdot\frac{q+1}{s}-2,\frac{q+1}{2}+\frac{q-1}{t}-1\}$.
\end{proof}

\begin{remark}
Similarly as Remark \ref{R1}, the minimum distance can reach $\frac{q}{2}+\sqrt{\frac{q}{2}}$ approximately.
\end{remark}

\section{Quantum MDS Codes of Length $n=lh+mr$}

\quad\; In this section,  $\mathbf{s\,\,is\,\,even}$, $\mathbf{h\leq\frac{s}{2}}$ and $\mathbf{r\leq \frac{t}{2}}$
and quantum MDS codes with length $n=lh+mr$ will be constructed. Similarly as the previous constructions, we also divide the vectors
$\overrightarrow{a}$ and $\overrightarrow{v}$ into two parts. However, in this case, coordinates of these two parts in the vector $\overrightarrow{a}$
have no duplication. Therefore, the quantum MDS codes in this section have larger minimum distances than the codes in previous sections.

The proof of the next result is similar to that of Lemma \ref{lemDD} and we omit the details.

\begin{lemma}\label{lemF2}
The following system of equations
\begin{equation}\label{equF1}
\begin{aligned}
\sum\limits_{k=0}^{h-1}g^{(2k+1)(\mu l-q-1)}u_{k}=0
\end{aligned}
\end{equation}
has a solution denote by $\overrightarrow{u}=(u_{0},u_{1},\ldots,u_{h-1})\in(\mathbb{F}_{q}^*)^{h}$ for all
$\lceil\frac{s-h}{2}\rceil+1\leq\mu\leq\lfloor\frac{s+h}{2}\rfloor-1$.
\end{lemma}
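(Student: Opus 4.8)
The statement is structurally identical to Lemma \ref{lemDD}, with the only change being that the exponent $k(\mu l-q-1)$ is replaced by $(2k+1)(\mu l-q-1)$. The plan is therefore to mirror the two-case argument of Lemma \ref{lemDD} verbatim, adjusting only the entries of the coefficient matrix. As there, I would set $\xi=g^l$, $\eta=g^{-q-1}\in\mathbb{F}_q^*$ and $c=\lceil\frac{s-h}{2}\rceil+1$; the relevant matrix has $(k+1)$-th column scaled by $(\xi^{c+\nu}\eta)^{2k+1}$ rather than $(\xi^{c+\nu}\eta)^{k}$. The key observation is that distinctness of the nodes still holds: since $h\le s/2$ (note this section assumes $h\le\frac{s}{2}$, not $h\le s-1$), the exponents $2(c+\nu)$ for $0\le\nu\le h-2$ remain distinct modulo $s$, so the underlying Vandermonde-type determinant is nonzero and the matrix has the expected full rank.

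The two cases are split by the parity of $h$, exactly as before. When $h$ is odd, $\lceil\frac{s-h}{2}\rceil=\frac{s-h}{2}$ and $\lfloor\frac{s+h}{2}\rfloor=\frac{s+h}{2}$; the system \eqref{equF1} is an $(h-1)\times h$ homogeneous system, and I would bordering it with an all-ones top row to form an $h\times h$ matrix $A'$, solve the inhomogeneous system $A'\overrightarrow{u}^T=(1,0,\ldots,0)^T$ by Cramer's Rule, and check that each $u_k$ lies in $\mathbb{F}_q^*$. When $h$ is even, the system is $(h-2)\times h$; I would delete the first and last columns to obtain two $(h-2)\times(h-1)$ matrices $A_0,A_{h-1}$, produce solutions $\overrightarrow{x},\overrightarrow{y}\in(\mathbb{F}_q^*)^{h-1}$ of the respective homogeneous systems, and combine them as $\overrightarrow{u}=(0,\overrightarrow{x})-\lambda(\overrightarrow{y},0)$ with $\lambda\in\mathbb{F}_q^*$ chosen to avoid the finitely many ratios $x_\iota/y_\iota$, which is possible since $h<q+1$.

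The crux of the argument, in both cases, is the Frobenius-invariance that forces $\overrightarrow{u}\in\mathbb{F}_q$ rather than merely $\mathbb{F}_{q^2}$. I would verify, using $s\mid q+1$ and $\xi^s=1$, the symmetry
\begin{equation*}
\left(\xi^{(2k+1)(c+\nu)}\eta^{2k+1}\right)^q
=\xi^{-(2k+1)(c+\nu)}\eta^{2k+1}
=\xi^{(2k+1)(c+h-2-\nu)}\eta^{2k+1},
\end{equation*}
which reverses the order of the rows and shows that the coefficient matrix is row-equivalent to its $q$-power entrywise conjugate. This is what makes the Cramer quotients $u_k=(-1)^k\det(D_k)/\det(A)$ fixed by the Frobenius map and hence $\mathbb{F}_q$-valued. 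The one point requiring slight care — and the main obstacle — is confirming that the odd factor $2k+1$ does not disrupt this symmetry: the reflection $\nu\mapsto h-2-\nu$ must produce exactly the same multiset of exponents in each column, and the parity sign picked up by $\det(A)$ and $\det(D_k)$ must cancel in the quotient just as the $(-1)^{(h-1)/2}$ factors did in Lemma \ref{lemD}. Once this invariance is in place, the conclusion $\overrightarrow{u}\in(\mathbb{F}_q^*)^h$ follows exactly as in Lemma \ref{lemDD}, so the details may indeed be omitted.
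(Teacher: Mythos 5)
Your plan matches the paper's intent: the paper gives no separate proof of this lemma, stating only that it is similar to Lemma \ref{lemDD}, and the adaptations you single out are the right ones. In particular, the entries $(\xi^{c+\nu}\eta)^{2k+1}$ factor as $z_\nu$ times even powers of $z_\nu$ with $z_\nu=\xi^{c+\nu}\eta$, so every determinant in sight is a nonzero multiple of a Vandermonde determinant in the squares $\xi^{2(c+\nu)}\eta^{2}$, and these are distinct precisely because $h\le\frac{s}{2}$; moreover the factor $2k+1$ does not break the Frobenius symmetry, since the relevant exponent is $(2k+1)\cdot s\equiv 0\pmod{s}$.

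However, one step as written is wrong: you copied the parity dichotomy of Lemma \ref{lemDD} verbatim, forgetting that in this section $s$ is \emph{even} rather than odd, which reverses the roles of the two cases. For $s$ even and $h$ odd, $s\pm h$ is odd, so $\lceil\frac{s-h}{2}\rceil=\frac{s-h+1}{2}$ and $\lfloor\frac{s+h}{2}\rfloor=\frac{s+h-1}{2}$, and the range of $\mu$ contains only $h-2$ values: this is the case needing the $(h-2)\times h$ matrix, the deletion of the first and last columns, and the combination $\overrightarrow{u}=(0,\overrightarrow{x})-\lambda(\overrightarrow{y},0)$. For $h$ even the range contains $h-1$ values and the bordering-plus-Cramer argument applies. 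Correspondingly, your displayed symmetry with reflection $\nu\mapsto h-2-\nu$ requires $2c+h-2\equiv 0\pmod{s}$, i.e. $c=\frac{s-h}{2}+1$, which holds exactly when $h$ is even; in the $h$ odd case one has $c=\frac{s-h+1}{2}+1$ and the correct reflection is $\nu\mapsto h-3-\nu$ over the $h-2$ rows. Each of your two sub-arguments is sound, but each is attached to the wrong parity of $h$; with the labels, matrix sizes, and reflection indices swapped accordingly, the proof goes through as you describe.
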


Here we choose
$$\overrightarrow{a}_1=(g,g\delta,\ldots,g\delta^{l-1},g^3,g^3\delta,\ldots,g^3\delta^{l-1},\ldots, g^{2h-1},g^{2h-1}\delta,\ldots,g^{2h-1}\delta^{l-1})$$
and $$\overrightarrow{v}_1=(v_0,v_0\delta,\ldots,v_0\delta^{l-1},v_1,v_1\delta,\ldots,v_1\delta^{l-1},\ldots, v_{h-1},v_{h-1}\delta,\ldots,v_{h-1}\delta^{l-1}),$$
where $v_{k}^{q+1}=u_k$($0\leq k\leq h-1$) and $\overrightarrow{u}=(u_0,u_1,\ldots,u_{h-1})$ is a solution of (\ref{equF1}).

\begin{lemma}\label{thmF3}
The identity $$\langle\overrightarrow{a}_1^{qi+j},\overrightarrow{v}_1^{q+1}\rangle=0$$ holds for all
$0\leq i,j\leq \lfloor\frac{s+h}{2}\rfloor\cdot\frac{q+1}{s}-3$.
\end{lemma}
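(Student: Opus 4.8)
The plan is to mimic the computation in Lemma \ref{thmB} and Lemma \ref{thmBB}, since the structure of $\overrightarrow{a}_1$ and $\overrightarrow{v}_1$ here is analogous: the first-part coordinates are built from the powers $g^{2k+1}$ (for $0\leq k\leq h-1$) scaled by the $l$-th roots of unity $\delta^\nu$, and the $v$-entries carry the same $\delta$-scaling as in Lemma \ref{thmBB}. First I would expand the Hermitian inner product by grouping the $n$ coordinates into the $h$ blocks indexed by $k$, each block running over $\nu=0,\ldots,l-1$. Using $(v_k\delta^\nu)^{q+1}=u_k\delta^{\nu(q+1)}$ and $a$-entries $g^{2k+1}\delta^\nu$, the sum factors as
\begin{equation}
\langle\overrightarrow{a}_1^{qi+j},\overrightarrow{v}_1^{q+1}\rangle
=\sum_{k=0}^{h-1}g^{(2k+1)(qi+j)}u_k\sum_{\nu=0}^{l-1}\delta^{\nu(qi+j+q+1)}.
\end{equation}
Since $\delta$ has order $l$, the inner sum over $\nu$ vanishes unless $l\mid qi+j+q+1$, exactly the divisibility condition handled by Lemma \ref{lemC}(i). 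So the only surviving case is $l\mid qi+j+q+1$, and then the inner sum equals $l$.

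In that surviving case I would invoke Lemma \ref{lemC}(i): for the stated range $0\leq i,j\leq\lfloor\frac{s+h}{2}\rfloor\cdot\frac{q+1}{s}-3$, the condition $l\mid(qi+j+q+1)$ forces $qi+j+q+1=\mu l$ with $\lceil\frac{s-h}{2}\rceil+1\leq\mu\leq\lfloor\frac{s+h}{2}\rfloor-1$. Substituting $qi+j=\mu l-q-1$ reduces the outer sum to
\begin{equation}
\langle\overrightarrow{a}_1^{qi+j},\overrightarrow{v}_1^{q+1}\rangle
=l\cdot\sum_{k=0}^{h-1}g^{(2k+1)(\mu l-q-1)}u_k,
\end{equation}
which is precisely the left-hand side of the defining system (\ref{equF1}) in Lemma \ref{lemF2}. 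Since $\overrightarrow{u}=(u_0,\ldots,u_{h-1})$ was chosen to satisfy (\ref{equF1}) for every admissible $\mu$ in exactly this range, the sum vanishes, completing the argument.

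The work is essentially bookkeeping, and the two genuine inputs are already available: the range compatibility between Lemma \ref{lemC}(i) and Lemma \ref{lemF2} (both quantified over $\lceil\frac{s-h}{2}\rceil+1\leq\mu\leq\lfloor\frac{s+h}{2}\rfloor-1$), and the guaranteed existence of the solution $\overrightarrow{u}\in(\mathbb{F}_q^*)^h$ from Lemma \ref{lemF2}. The one point deserving care—the step I expect to be the main obstacle, minor as it is—is verifying that the exponent pattern $2k+1$ (rather than the plain $k$ of Lemma \ref{thmBB}) is faithfully what Lemma \ref{lemF2} solves; one must confirm that the odd powers $g,g^3,\ldots,g^{2h-1}$ remain distinct modulo $\langle\delta\rangle$ so that the $h$ blocks are genuinely disjoint and the factorization into $k$-blocks is legitimate. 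Given $h\leq\frac{s}{2}$ and $s$ even, the residues $1,3,\ldots,2h-1$ are distinct modulo $s$, so the blocks do not collide and the factorization is valid; this is exactly why the hypothesis $h\leq\frac{s}{2}$ appears in this section. With that confirmed, the identity holds for all $0\leq i,j\leq\lfloor\frac{s+h}{2}\rfloor\cdot\frac{q+1}{s}-3$.
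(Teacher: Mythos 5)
Your proof is correct and is essentially the paper's own argument: the paper's proof is just the one-line citation of Lemmas \ref{lemC}(i) and \ref{lemF2}, and your expansion (factor the inner product into $k$-blocks, let the geometric sum over $\delta$ kill everything except the case $l\mid qi+j+q+1$, then apply Lemma \ref{lemC}(i) followed by Lemma \ref{lemF2}) is exactly the computation the paper carries out explicitly in the parallel Lemmas \ref{thmB} and \ref{thmBB}. One small remark: the point you flag as the main obstacle—distinctness of the cosets $g^{2k+1}\langle\delta\rangle$, i.e.\ of $1,3,\ldots,2h-1$ modulo $s$—is not actually needed for this identity (regrouping the sum into blocks is valid even if entries repeated); it is needed only so that the coordinates of $\overrightarrow{a}$ are distinct and the GRS code in Theorem \ref{ThmF} is well defined, which is indeed where the hypothesis $h\leq\frac{s}{2}$ earns its keep.
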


\begin{proof}
The result follows from Lemmas \ref{lemC} (i) and \ref{lemF2}.
\end{proof}

Now we construct the second part of coordinates in $\overrightarrow{a}$ and $\overrightarrow{v}$.
We choose $$\overrightarrow{a}_2=(1,\theta,\ldots,\theta^{m-1},g^2,g^2\theta,\ldots,g^2\theta^{m-1}, \ldots,g^{2r-2},g^{2r-2}\theta,\ldots,g^{2r-2}\theta^{m-1})$$
and $$\overrightarrow{v}_2=(1,g^{\frac{t}{2}},\ldots,g^{(m-1)\cdot\frac{t}{2}},1,g^{\frac{t}{2}}, \ldots,g^{(m-1)\cdot\frac{t}{2}},\ldots,1,g^{\frac{t}{2}},\ldots,g^{(m-1)\cdot\frac{t}{2}}).$$
Then we have the following lemma.

\begin{lemma}\label{thmF4}
The identity $$\langle\overrightarrow{a}_2^{qi+j},\overrightarrow{v}_2^{q+1}\rangle=0$$ holds for all $0\leq i,j\leq\frac{q+1}{2}+\frac{q-1}{t}-2$.
\end{lemma}

\begin{proof}
By Lemma \ref{yC},
\begin{equation}
\begin{aligned}
\langle\overrightarrow{a}_2^{qi+j},\overrightarrow{v}_2^{q+1}\rangle&=\sum\limits_{k=0}^{r-1}\sum\limits_{\nu=0}^{m-1}
(g^{2k}\theta^{\nu})^{qi+j}\cdot\theta^{\nu\cdot\frac{q+1}{2}}&\\
&=\sum\limits_{k=0}^{r-1}g^{2k(qi+j)}\sum\limits_{\nu=0}^{m-1}\theta^{\nu(qi+j+\frac{q+1}{2})}&\\
&=0.&
\end{aligned}
\end{equation}
\end{proof}

Since both $s$ and $t$ are even,  it is clear that all coordinates of $\overrightarrow{a}_1$ are nonsquares and
all coordinates of $\overrightarrow{a}_2$ are squares. Thus there exists no duplication between these two parts. Choose
$\overrightarrow{a}=(\overrightarrow{a}_1,\overrightarrow{a}_2)$ and $\overrightarrow{v}=(\overrightarrow{v}_1,\overrightarrow{v}_2)$.

\begin{theorem}\label{ThmF}
Let $n=lh+mr$, where even $s\mid q+1$, even $t\mid q-1$, $t\geq 2$, $l=\frac{q^2-1}{s}$, $m=\frac{q^2-1}{t}$, $h\leq \frac{s}{2}$ and $r\leq \frac{t}{2}$.
Then for any $1\leq d\leq min\{\lfloor\frac{s+h}{2}\rfloor\cdot\frac{q+1}{s}-2,\frac{q+1}{2}+\frac{q-1}{t}-1\}$, there exists an
$[[n,n-2d,d+1]]_q$ quantum MDS code.
\end{theorem}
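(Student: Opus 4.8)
The plan is to follow the template of Theorems \ref{Thm} and \ref{ThmD}, concatenating the two blocks $(\overrightarrow{a}_1,\overrightarrow{v}_1)$ and $(\overrightarrow{a}_2,\overrightarrow{v}_2)$ and applying Theorem \ref{y2} followed by Theorem \ref{thmA}, but now exploiting the parity observation recorded just before the statement so as to dispense with the gluing parameter $\lambda$ altogether. First I would check that $\overrightarrow{a}=(\overrightarrow{a}_1,\overrightarrow{a}_2)$ has $n=lh+mr$ pairwise distinct entries. Since $\delta=g^s$ with $s$ even, each entry of $\overrightarrow{a}_1$ is $g^{2\alpha+1}\delta^i=g^{(2\alpha+1)+si}$, an odd power of $g$; since $\theta=g^t$ with $t$ even, each entry of $\overrightarrow{a}_2$ is $g^{2\beta}\theta^j=g^{2\beta+tj}$, an even power of $g$. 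Hence the two blocks are disjoint. Within $\overrightarrow{a}_1$ the bound $h\leq\frac{s}{2}$ gives $0\leq 2\alpha\leq s-2<s$, so the cosets $g^{2\alpha+1}\langle\delta\rangle$ are distinct; within $\overrightarrow{a}_2$ the bound $r\leq\frac{t}{2}$ gives $0\leq 2\beta\leq t-2<t$, so the cosets $g^{2\beta}\langle\theta\rangle$ are distinct. Together with $v_k^{q+1}=u_k\in\F_q^*$ from Lemma \ref{lemF2} and the fact that every entry of $\overrightarrow{v}_2$ is a nonzero power of $g^{t/2}$, this confirms that $\mathbf{GRS}_d(\overrightarrow{a},\overrightarrow{v})$ is a well-defined $[n,d,n-d+1]_{q^2}$ MDS code.

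The core step is then purely formal: because the Hermitian form is additive over the concatenation, for any exponents $i,j$ one has
\begin{equation*}
\langle\overrightarrow{a}^{\,qi+j},\overrightarrow{v}^{q+1}\rangle
=\langle\overrightarrow{a}_1^{\,qi+j},\overrightarrow{v}_1^{q+1}\rangle
+\langle\overrightarrow{a}_2^{\,qi+j},\overrightarrow{v}_2^{q+1}\rangle .
\end{equation*}
By Lemma \ref{thmF3} the first summand is $0$ for all $0\leq i,j\leq\lfloor\frac{s+h}{2}\rfloor\cdot\frac{q+1}{s}-3$, and by Lemma \ref{thmF4} the second summand is $0$ for all $0\leq i,j\leq\frac{q+1}{2}+\frac{q-1}{t}-2$. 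Fixing $d$ in the stated range, I would observe that $d\leq\lfloor\frac{s+h}{2}\rfloor\cdot\frac{q+1}{s}-2$ forces $d-1\leq\lfloor\frac{s+h}{2}\rfloor\cdot\frac{q+1}{s}-3$, while $d\leq\frac{q+1}{2}+\frac{q-1}{t}-1$ forces $d-1\leq\frac{q+1}{2}+\frac{q-1}{t}-2$. Hence both summands vanish for every $0\leq i,j\leq d-1$, so $\langle\overrightarrow{a}^{\,qi+j},\overrightarrow{v}^{q+1}\rangle=0$ throughout that range. Theorem \ref{y2} then gives that $\mathbf{GRS}_d(\overrightarrow{a},\overrightarrow{v})$ is Hermitian self-orthogonal, and Theorem \ref{thmA} converts this into the claimed $[[n,n-2d,d+1]]_q$ quantum MDS code.

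I expect no real obstacle inside the theorem, since Lemmas \ref{lemF2}, \ref{thmF3} and \ref{thmF4} already carry all the arithmetic. The single conceptual point is the disjointness of the two coordinate blocks, which is exactly what the evenness of both $s$ and $t$ buys us and what makes this construction cleaner than those of Sections 3 and 4: with no shared coordinates there is no overlap term to repair, so the auxiliary parameter $\lambda$ disappears and the only thing left to police is the matching of the bound on $d$ with the two lemma ranges.
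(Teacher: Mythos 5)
Your proposal is correct and follows essentially the same route as the paper: concatenate $(\overrightarrow{a}_1,\overrightarrow{v}_1)$ and $(\overrightarrow{a}_2,\overrightarrow{v}_2)$, invoke Lemmas \ref{thmF3} and \ref{thmF4} to kill both summands of $\langle\overrightarrow{a}^{qi+j},\overrightarrow{v}^{q+1}\rangle$ for $0\leq i,j\leq d-1$, then apply Theorem \ref{y2} and Theorem \ref{thmA}. Your explicit verification of the disjointness of the two blocks (odd versus even powers of $g$) and of the distinctness of cosets within each block is exactly the observation the paper records immediately before the theorem, so there is no substantive difference.
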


\begin{proof}
The vectors $\overrightarrow{a}$ and $\overrightarrow{v}$ are defined as above. According to Lemmas \ref{thmF3} and \ref{thmF4}, it takes
$$\langle\overrightarrow{a}^{qi+j},\overrightarrow{v}^{q+1}\rangle=\langle\overrightarrow{a}_1^{qi+j},\overrightarrow{v}_1^{q+1}\rangle+ \langle\overrightarrow{a}_2^{qi+j},\overrightarrow{v}_2^{q+1}\rangle=0,$$
for any $0\leq i,j\leq d-1$. Therefore, by Theorem \ref{y2}, the code $\mathbf{GRS}_{d}(\overrightarrow{a},\overrightarrow{v})$ is Hermitian self-orthogonal. By Theorem 1, there exists an $[[n,n-2d,d+1]]_q$ quantum MDS code, where $n=lh+mr$ and
$1\leq d\leq min\{\lfloor\frac{s+h}{2}\rfloor\cdot\frac{q+1}{s}-2,\frac{q+1}{2}+\frac{q-1}{t}-1\}$.
\end{proof}

\begin{remark}\label{R2}
When $h$ approaches to $\frac{s}{2}$ and $t=4$, both $\lfloor\frac{s+h}{2}\rfloor\cdot\frac{q+1}{s}-2$ and $\frac{q+1}{2}+\frac{q-1}{t}-1$
approach to $\frac{3}{4}q$. So the minimum distance of the quantum MDS code can approach to $\frac{3}{4}q$.
\end{remark}

\begin{example}
When $q\equiv 9 \pmod{20}$, applying Theorem \ref{ThmF} with $(s,h,t,r)=(10, 4, 4, 1)$, there exists $q$-ary quantum MDS codes with parameters
\[\left[\left[\frac{13}{20}(q^2-1), \frac{13q^2-28q+79}{20}, \frac{7q-13}{10}\right]\right]_q\]
whose minimal distance is approximately $0.7q$ when $q$ is large.  In general, the length satisfies $\frac{13}{20}(q^2-1)\not\equiv 0,1\pmod{q\pm 1}$.
Therefore this class of quantum MDS codes are new.
\end{example}

\begin{example}
When $q\equiv 29\pmod{60}$, applying Theorem \ref{ThmF} with $(s,h,t,r)=(30, 14, 4, 1)$, there exists quantum MDS codes with parameters
\[\left[\left[\frac{43}{60}(q^2-1), \frac{43q^2-88q+229}{60}, \frac{11q-19}{15}\right]\right]_q\]
whose minimal distance is approximately $11q/15\approx 0.733q$ when $q$ is large. Also the length satisfies  $\frac{43}{60}(q^2-1)\not\equiv 0,1\pmod{q\pm 1}$
and these quantum MDS codes are new.
\end{example}

\section{Conclusion}

 \quad\; Applying Hermitian construction and GRS codes, we construct several new classes of quantum MDS codes over $\mathbb{F}_{q^2}$ through Hermitian
self-orthogonal GRS codes. Some of these quantum MDS codes can have minimum distance bigger than $\frac{q}{2}+1$. Since the lengths are chosen up to two
variables $h$ and $r$. This makes their lengths more flexible than previous constructions. Using our results, we can produce many new quantum MDS codes
with new lengths which have not appeared in previous works. We give an example.

\begin{example}
Choose $q=37$. Utilizing the results in this paper, there are 438 new $[[n,n-2d,d+1]]_{37}$ quantum MDS codes with minimum distance
$d+1\geq \frac{q}{2}+1$, which were not reported in previous papers. We list some of new $[[n,n-2d,d+1]]_{37}$ quantum MDS codes in Table 1.
\end{example}

For a fixed $q$, it is expected to have $[[n,n-2d,d+1]]_q$ quantum MDS codes for any length of $q+1<n\leq q^2+1$ and minimum distance
$\frac{q}{2}+1\leq d+1\leq min\{\frac{n}{2},q+1\}$. But sum up all the results, such quantum MDS codes is still very sparse. It is expected that
more quantum MDS codes with large minimal distance will be explored.

\begin{center}
\begin{longtable}{|c|c|c|}  
\caption{Some of New $[[n,n-2d,d+1]]_{37}$ Quantum MDS Codes} \\ \hline
$n$ & $n-2d$ & $d+1$ \\  \hline
588   &  544   & 23 \\ \hline
624  & 580 & 23 \\ \hline
660  & 614  &  24 \\ \hline
696 &  650 & 24  \\ \hline
702 &  658 & 23    \\ \hline
732 &  684  &  25 \\ \hline
738 & 694 &  23 \\ \hline
768  &  720 &   25 \\ \hline
774 & 728 &   24 \\ \hline
804 & 756 &   25 \\ \hline
810 &  764 &   24 \\ \hline
816 &  772 &  23 \\ \hline
840 & 792 &  25 \\ \hline
846 & 798 &  25 \\ \hline
852 & 808  & 23 \\ \hline
882 & 834   &   25 \\ \hline
918 & 868 &  26 \\ \hline
954 & 904 &  26 \\ \hline
\end{longtable}
 \end{center}

 \section{Acknowledgements}

This research is supported by National Natural Science Foundation of China under Grant 11471008 and Grant 11871025
and the self-determined research funds of CCNU from the colleges' basic research and operation of MOE(Grant No. CCNU18TS028).

\end{document}